\newcommand{\vect}[1]{\mathbf{#1}}
\def\Htran{\mbox{\tiny $\mathrm{H}$}}
\def\Ttran{\mbox{\tiny $\mathrm{T}$}}
\def\CN{\mathcal{N}_{\mathbb{C}}} %Complex Gaussian
\def\imagunit{\mathsf{j}} % Imaginary number
\def\Ptx{P_{\mathrm{tx}}}
\def\Prx{P_{\mathrm{rx}}}
\def\mod{\mathrm{mod}}
\begin{document}

\title{Near-Field Beamforming and Multiplexing Using Extremely Large Aperture Arrays}
\label{chap:ELAA}

\author{Parisa Ramezani and Emil Bj\"ornson}

\institute{Parisa Ramezani \at KTH Royal Institute of Technology, Electrum 229, 16440 Kista, Sweden \email{parram@kth.se}
\and Emil Bj\"ornson \at KTH Royal Institute of Technology, Electrum 229, 16440 Kista, Sweden \email{emilbjo@kth.se}}

\maketitle

\abstract{The number of users that can be spatially multiplexed by a wireless access point depends on the aperture of its antenna array. When the aperture increases and wavelength shrinks, ``new'' electromagnetic phenomena can be utilized to further enhance network capacity. In this chapter, we describe how extremely large aperture arrays (ELAA) can extend the radiative near-field region to kilometer distances. We demonstrate how this affects the propagation models in line-of-sight (LoS) scenarios and enables finite-depth beamforming. In particular, it becomes possible to simultaneously serve users that are located in the same direction but at different distances.}

\section{Introduction}

The access points in current wireless networks use arrays of antennas for beamforming and spatial multiplexing. The former refers to the spatial focusing of each radiated signal on its intended receiver, while the latter refers to the simultaneous beamformed transmission of different data to users located at different locations; see Fig.~\ref{fig:mu-MIMO-basic}.
This technology is called massive multiple-input multiple-output (mMIMO) in 5G networks, which typically use arrays of 64 antennas in the 3 GHz band \cite{Bjornson1}, while the underlying communication theory for mMIMO supports arbitrarily many antennas and any radio-frequency band \cite{massivemimobook,Marzetta,Larsson}.
When the research community is looking beyond the 5G mMIMO scenarios, new terminologies are being used to distinguish the new characteristics.
The term \emph{ultra mMIMO} is used in \cite{Akyildiz,Faisal,Jamali} for systems with hundreds or even thousands of antennas operating in mmWave and THz bands. The terms \emph{large intelligent surfaces} and \emph{holographic mMIMO} \cite{Pizzo1,Huang,Dardari2021a,Pizzo2} are used when the arrays consist of electrically small and densely packed antennas, so that propagation effects can be studied using integrals over the aperture rather than summations of individual antennas.
In this chapter, we will consider the latter situation but focus on scenarios when the array aperture is so large that the users are in the so-called near-field. We will refer to such arrays as \emph{extremely large aperture arrays (ELAAs)} as in \cite{Bjornson1,Csatho2020a}. In this context, the aperture is measured as the array's length relative to the wavelength.

\begin{figure}[t!]
	\centering 
	\begin{overpic}[width=0.75\columnwidth,tics=10]{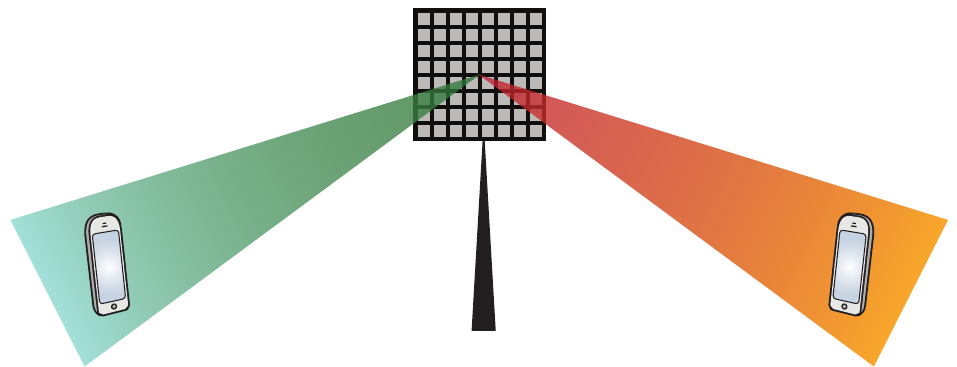}
	 \put (4,20) {\small Beam 1}
  \put (87,20) {\small Beam 2}
  \put (59,35) {\small 5G mMIMO array}
  \put (59,31) {\small with 64 antennas}
\end{overpic} 
	\caption{mMIMO is used for spatial multiplexing of user devices, which are served simultaneously using different beams. A 5G scenario with 64 antennas is shown.}
	\label{fig:mu-MIMO-basic} 
\end{figure}

Traditionally, three regions are defined for antennas: the reactive near-field, the radiative near-field, and the far-field. These regions are defined from the transmitter perspective but can be equivalently viewed from the receiver due to reciprocity \cite{Selvan}. Consider the isotropic transmit antenna in Fig.~\ref{fig:three-fields}. Inductive coupling appears in the reactive near-field closest to the antenna and is commonly used by radio-frequency identification (RFID) tags, but the range is very short.
In this chapter, we are only concerned with the radiated electromagnetic waves and these have spherical wavefronts but can be approximated as planar in the far-field.
Conventional wireless communication systems operate in the far-field while the radiative near-field, where the spherical shape can be observed, is shorter than a meter from the transmitter.
However, this will change with the usage of ELAAs and it is therefore imperative to develop a theory for communications in the radiative near-field.

\begin{figure}[t!]
	\centering 
	\begin{overpic}[width=0.75\columnwidth,tics=10]{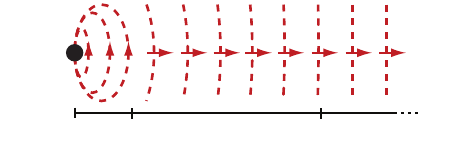}
	 \put (-3,18.7) {\small Transmitter}
  \put (17,3) {\small Reactive}
  \put (17,0) {\small near-field}
  \put (36,0) {\small Radiative near-field}
  \put (74,0) {\small Far-field}
\end{overpic} 
	\caption{There are three field regions around a transmitter. Conventional systems operate in the far-field where the waves are approximately planar. When using an ELAA, the users might be in the radiative near-field where the waves are spherical.}
	\label{fig:three-fields} \vspace{-3mm}
\end{figure}

This chapter aims to lay the foundations of near-field communication with ELAAs by inspecting the unique characteristics of communicating in the radiative near-field. Specifically, three important properties must be considered when studying the near-field behavior, namely, the differences in the distances to the individual antennas, in their effective antenna areas, and in their polarization losses. We first derive a closed-form expression for the line-of-sight (LoS) channel gain between a single-antenna device and a planar array based on these three features and show that the classical far-field channel gain as well as the models in \cite{Hu,Bjornson3,Garcia,Tang,Ellingson} that do not take all the  near-field properties into account become inaccurate as the size of the ELAA grows large.
We then use the near-field compliant LoS formula to study the asymptotic signal-to-noise ratio (SNR) limit, as the array size goes to infinity.
Building on the traditional definition of antenna gain, we then introduce the antenna array gain and define a new metric for characterizing the near- and far-field regions of an ELAA. Next, the physical shape of the beams created in the near-field is explored in terms of width and depth \cite{Bjornson4}. Finally, we demonstrate how the finite depth enables a new mode of spatial multiplexing where users are distinguished in the depth domain.

\section{Channel Gain Modeling in the Radiative Near-Field}

In this section, we will analyze the channel gain in free-space LoS propagation scenarios. We begin with a motivating example to explain the need for moving beyond classical far-field models.

Consider the LoS scenario in Fig.~\ref{fig:fs1}, where an ideal isotropic transmit antenna sends a signal to a planar receive antenna with area $A$ located at distance $d$. The classical Friis' transmission formula \cite{Friis} manifests that received power is
\begin{equation} \label{eq:received_power_SISO}
\Prx =  \frac{A}{4\pi d^2} \Ptx
\end{equation}
where $\Ptx$ denotes the transmit power and the factor
\begin{equation} \label{eq:beta-definition}
\beta_d=\frac{A}{4\pi d^2}
\end{equation}
is the channel gain (while its inverse $\beta_d^{-1}$ is the pathloss). The derivation of \eqref{eq:beta-definition} relies on several technical assumptions: the receive antenna is lossless, the incident wave is planar, the antenna is perpendicular to the wave propagation so that $A$ is the effective antenna area as well, and the antenna polarization matches perfectly with the wave.

\begin{example}{Example 1}If the receive antenna is isotropic, its area is $A= {\lambda^2}/{(4\pi)}$, where $\lambda$ is the wavelength.
When the carrier frequency is  $f=3$\,\textrm{GHz}, we have $\lambda = c/f = 0.1$\,\textrm{m} where $c = 3 \times 10^8$\,\textrm{m/s}  is the speed of light. For propagation distances $d\in [2.5, 250]$\,m, the channel gain $\beta_d$ ranges from $-50$\,dB to $-70$\,dB, which are very small numbers.
\end{example}

\begin{figure}[t!]
\vspace{-4mm}
   \begin{subfigure}[t]{0.32\textwidth}
   
   \begin{overpic}[width=\textwidth]{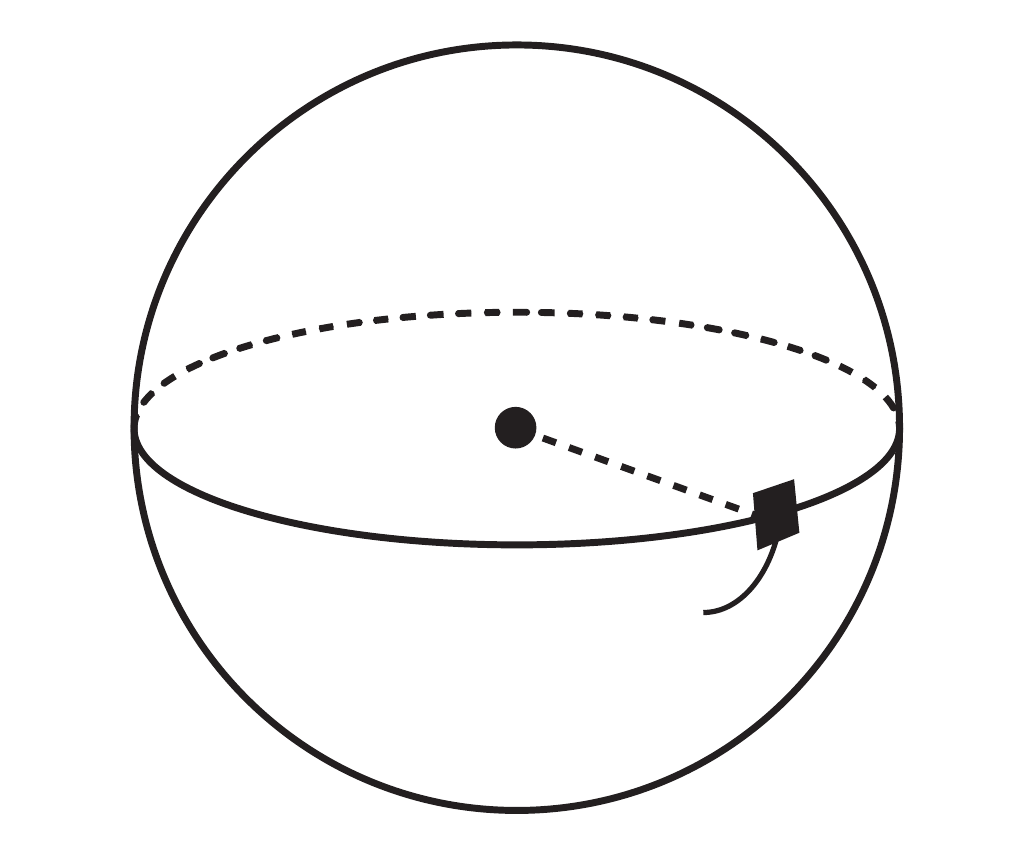}
        \put (16.5,41) {\scriptsize Transmitter}
 \put (59,41) {\small $d$}
  \put (28,20) {\scriptsize Receive antenna}
     \put (28,14) {\scriptsize with area $A$}
      \end{overpic} \vspace{-5mm}
     \caption{One receive antenna with area $A$.}
     \label{fig:fs1}
   \end{subfigure}
    \hspace{0.012\textwidth} 
   \begin{subfigure}[t]{0.32\textwidth}
  
   \begin{overpic}[width=\textwidth,tics=10]{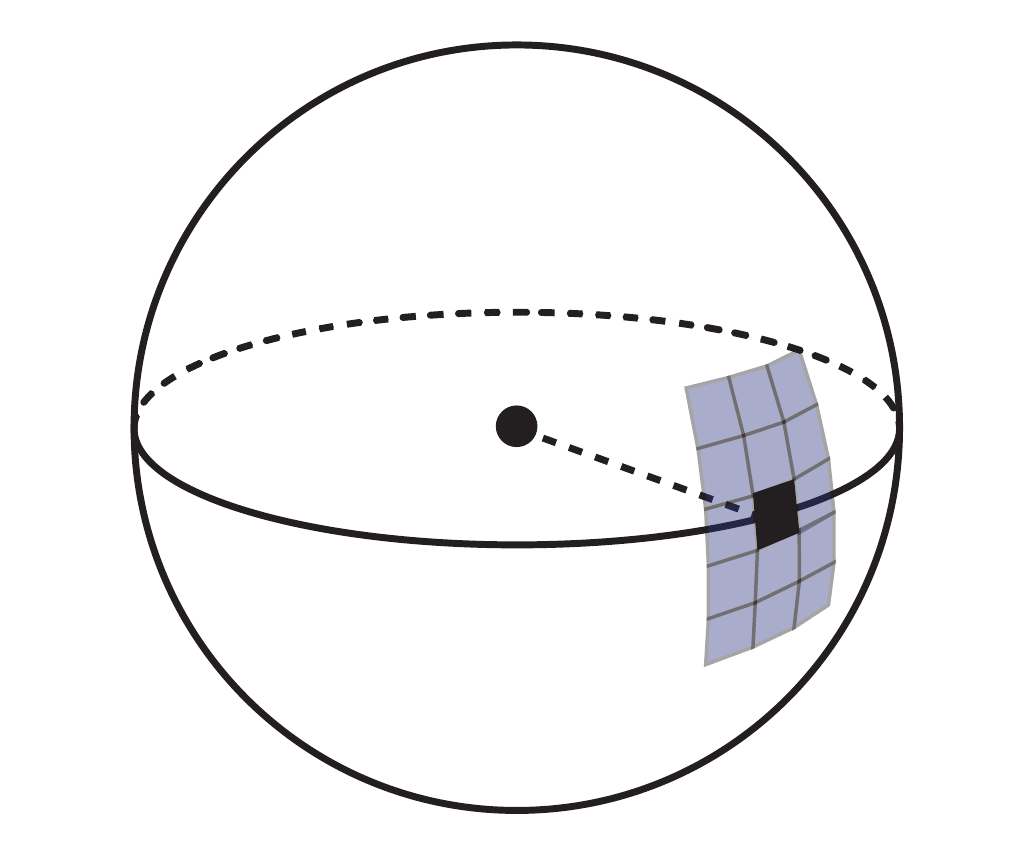}
  \put (16.5,41) {\scriptsize Transmitter}
 \put (59,41) {\small $d$}
 \put (26,20) {\scriptsize Spherical array}
     \put (28,14) {\scriptsize with $N$ antennas}
\end{overpic} \vspace{-5mm}
     \caption{Spherical array with $N$ equal-sized receive antennas.}
     \label{fig:fs2}
   \end{subfigure}
     \hspace{0.012\textwidth}  
   \begin{subfigure}[t]{0.32\textwidth}
   
   	\begin{overpic}[width=\textwidth,tics=10]{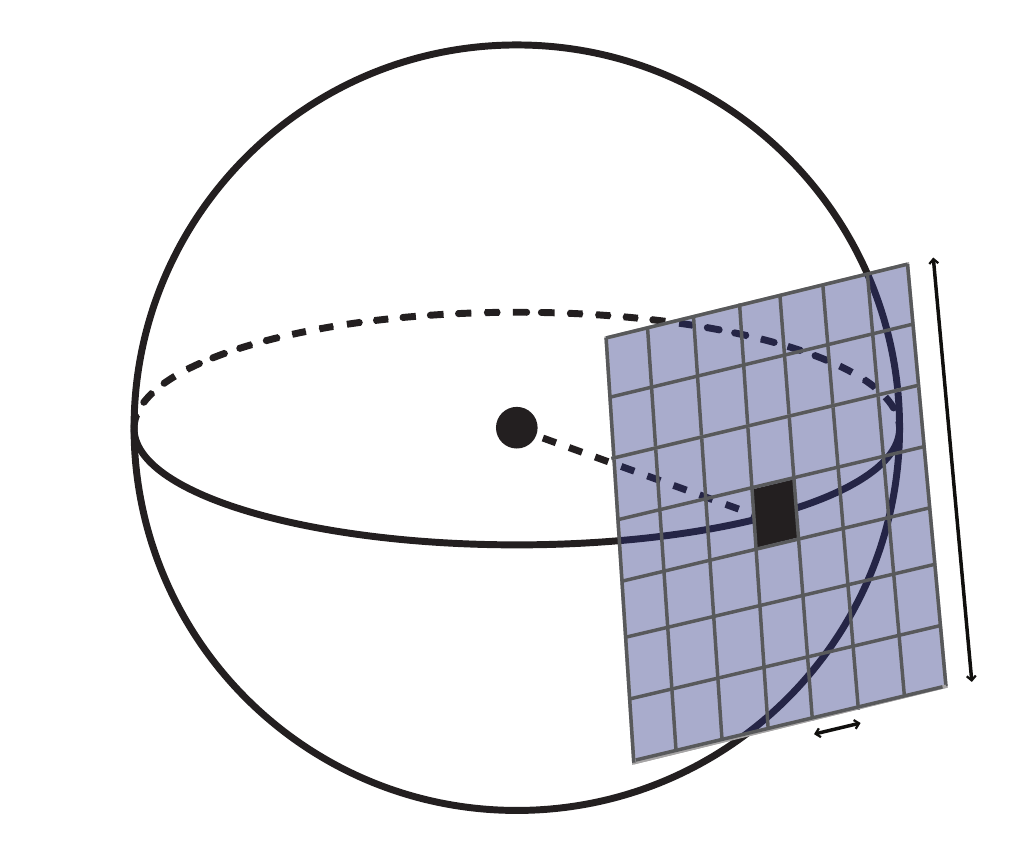}
  \put (16.5,41) {\scriptsize Transmitter}
 \put (53,34) {\small $d$}
 \put (27,22) {\scriptsize Planar array}
 \put (33,16) {\scriptsize with $N$}
     \put (34,10) {\scriptsize antennas}
   \put (93,36) {\small $\sqrt{NA}$}
   \put (77,6) {\scriptsize $\sqrt{A}$}
\end{overpic}

     \caption{Planar array with $\sqrt{N} \times \sqrt{N}$ equal-sized receive antennas.} \vspace{-5mm}
     \label{fig:fs3} 
   \end{subfigure}
   \caption{Examples of basic antenna scenarios.}
   \label{fig:fs}
\end{figure}

A way to increase the channel gain in \eqref{eq:beta-definition} is to make the total receive antenna area larger. This can be achieved by deploying $N$ antennas
(of the same kind as before) at the same distance, as illustrated in Fig.~\ref{fig:fs2}. If each antenna has an orientation and polarization that match its received signal, the total received power is $N$ times the value in \eqref{eq:received_power_SISO}:
\begin{equation} \label{eq:received_power_SISO2}
\Prx^{\textrm{spheric-}N} = N \Prx = N \beta_d \Ptx.
\end{equation}
If we now let $N\to \infty$ in \eqref{eq:received_power_SISO2}, we observe that the received power is approaching infinity. This is impossible since we can never receive more power than what was transmitted.
The catch is that we will eventually have the sphere in Fig.~\ref{fig:fs2} filled with antennas and then we cannot increase $N$ anymore. We need $N \beta_d \leq 1$, thus we cannot increase $N$ beyond $\beta_d^{-1}$.
We could, however, build arbitrarily large antenna arrays if they are planar as in Fig.~\ref{fig:fs3}. The outermost antennas will then be further than $d$ from the transmitter, which implies that the received power will not grow linearly with $N$ as in \eqref{eq:received_power_SISO2}. We need to develop different formulas for that scenario.

\subsection{Near-Field Compliant Channel Gain Modeling}
\label{sec:channel_gain}

We will now develop an asymptotically accurate channel gain model for the planar array case in Fig.~\ref{fig:fs3}. For brevity, the following assumption regarding the geometry of the ELAA is used in the remainder of this chapter (without being explicitly stated at every place), but the fundamental behaviors we  uncover are general.

\begin{example}{Assumption 1}The ELAA is a planar array with $N$ antennas that each has area $A$. The antennas have size $\sqrt{A} \times \sqrt{A}$ and are equally spaced on an $\sqrt{N} \times \sqrt{N}$ grid in the $XY$-plane.\footnote{Throughout this chapter, $\sqrt{N}$ is an integer for simplicity, but most of the analytical results only require a quadratic planar array with dimension $\sqrt{NA} \times \sqrt{NA}$. For a given array area $NA$, we can always adapt $A$ so that $N$ becomes the square of an integer.} The antennas are deployed edge-to-edge,  thus the total array area  is $NA$.
\end{example}

We will often consider ELAA as the receiver but the formulas that we derive also hold when the transmitter and receiver switch roles.
The \emph{effective} area of each receive antenna depends on its geometric location and rotation, with respect to the direction of the transmitter. If the receive antenna is fully perpendicular to the direction of propagation, then the effective area equals $A$. In any other case, the effective area is smaller than $A$. 

When the ELAA is in the radiative near-field of the transmitter (or vice versa), three fundamental properties must be accounted for when considering the amplitude and phase of the impinging wave:

\begin{enumerate}

\item The distances to the antennas vary over the array; 

\item The effective antenna areas vary since the antennas are seen from different angles; 

\item The losses from polarization mismatch vary since the signals are received from different angles.

\end{enumerate}

Fig.~\ref{fig:geometric_setup} shows the assumed setup with an isotropic transmitter and a receiving ELAA. 
If we number the antennas from left to right, row by row from the top, then the center of the $n$th receive antenna has coordinate $\vect{p}_n=(x_n,y_n,0)$ given by
\begin{align} \label{eq:xn}
x_n &= - \frac{(\sqrt{N}-1)\sqrt{A}}{2} + \sqrt{A} \, \mod(n-1,\sqrt{N}), \\
y_n &= \frac{(\sqrt{N}-1)\sqrt{A}}{2} - \sqrt{A} \left\lfloor \frac{n-1}{\sqrt{N}} \right\rfloor,  \quad \quad \quad \quad  \textrm{for }\,  n=1,\ldots,N, \label{eq:yn}
\end{align}
where  $\mod(\cdot,\cdot)$ is the modulo operation and $\lfloor \cdot \rfloor$ rounds the argument to the closest smaller integer.
The following lemma (adapted from \cite{Bjornson2,Dardari}) provides a general way of computing the channel gains to any of the $N$ antennas of a planar array.

\begin{figure}[t!]
        \centering 
	\begin{overpic}[width=0.8\columnwidth,tics=10]{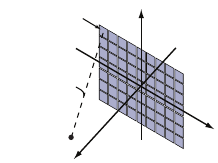}
	\put(14,10){Transmitter}
	\put(11,7){$\vect{p}_t=(x_t,y_t,d)$}
	\put(15,67.3){Receive antenna $n$:}
	\put(15,64){$\vect{p}_n=(x_n,y_n,0)$}
	\put(0,37){Propagation distance \,$=$}
	\put(-5,32){\small $\sqrt{(x_n\!-\!x_t)^2\!+\!(y_n\!-\!y_t)^2\!+\!d^2}$}
	\put(96.5,19){$X$}
	\put(67,66){$Y$}
	\put(38,0){$Z$}
\end{overpic} 
                \caption{An isotropic transmitter at an arbitrary location $\vect{p}_t=(x_t,y_t,d)$ transmits to a planar array located in the $XY$-plane. The  distance to receive antenna $n$ is shown.} 
                \label{fig:geometric_setup}
\end{figure}

\begin{lemma} \label{lemma1}
Consider a lossless isotropic antenna located at $\vect{p}_t=(x_t,y_t,d)$ that transmits a signal that has polarization in the $Y$-direction when traveling in the $Z$-direction.
 Suppose the planar receive antenna is located in the $XY$-plane, is centered at $\vect{p}_n=(x_n,y_n,0)$, and has area $a \times a$. The channel gain is given by
 
 \begin{align}
 \label{eq:total_gain}
\left|h_n(\vect{p}_t)\right|^2 &= \left|   \frac{1}{a} \int_{x_n-a/2}^{x_n+a/2} \int_{y_n-a/2}^{y_n+a/2} \epsilon(\vect{p}_t, \vect{p}_r) \partial x_r \partial y_r\right|^2
\end{align}
where $\vect{p}_r=(x_r,y_r,0)$ contains the integration variables and 
the impinging electric field is proportional to
\begin{equation}
\label{eq:complex_channel}
\epsilon(\vect{p}_t, \vect{p}_r) \!=\!  \frac{\sqrt{d \left((x_r - x_t)^2 + d^2\right)}}{\sqrt{4 \pi} \left((x_r - x_t)^2+(y_r - y_t)^2+d^2\right)^{\frac{5}{4}}} e^{-\imagunit \frac{2\pi}{\lambda}\sqrt{(x_r - x_t)^2+(y_r - y_t)^2+d^2}}.
\end{equation}
 \end{lemma}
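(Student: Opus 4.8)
The plan is to build the complex field $\epsilon(\vect{p}_t,\vect{p}_r)$ in \eqref{eq:complex_channel} factor by factor, attaching each of the three near-field effects listed above to one multiplicative term, and then to argue that coherent aperture integration of this field, normalized by $1/a$, reproduces the Friis gain \eqref{eq:beta-definition} in the far-field broadside limit. First I would fix notation: write $r=\sqrt{(x_r-x_t)^2+(y_r-y_t)^2+d^2}$ for the propagation distance to a general integration point $\vect{p}_r$, let $\hat{\vect{k}}=(\vect{p}_r-\vect{p}_t)/r$ be the propagation direction, and let $\hat{\vect{z}}$ and $\hat{\vect{y}}$ be the array normal and the polarization axis. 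The remainder of the proof is then the identification of three scalar factors and one phase.

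Second, I would treat the amplitude. Spherical spreading of the isotropic source makes the power density decay as $1/(4\pi r^2)$, so the field amplitude carries a factor $1/(\sqrt{4\pi}\,r)$. The effective-area effect (property 2) comes from projecting the aperture onto the plane orthogonal to $\hat{\vect{k}}$: since the array normal is $\hat{\vect{z}}$, the projection factor in power is $|\hat{\vect{k}}\cdot\hat{\vect{z}}|=d/r$, contributing $\sqrt{d/r}$ to the amplitude. The polarization effect (property 3) I would handle by taking the incident field polarization to be the unit projection $\hat{\vect{e}}$ of $\hat{\vect{y}}$ onto the plane orthogonal to $\hat{\vect{k}}$ and matching it against the $\hat{\vect{y}}$-polarized receiver; a short computation gives the mismatch factor $|\hat{\vect{e}}\cdot\hat{\vect{y}}|^2=1-(\hat{\vect{y}}\cdot\hat{\vect{k}})^2=((x_r-x_t)^2+d^2)/r^2$, i.e.\ $\sqrt{(x_r-x_t)^2+d^2}/r$ in amplitude. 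Multiplying the three amplitude factors yields $\sqrt{d((x_r-x_t)^2+d^2)}/(\sqrt{4\pi}\,r^{5/2})$, which is exactly the prefactor in \eqref{eq:complex_channel} after writing $r^{5/2}=((x_r-x_t)^2+(y_r-y_t)^2+d^2)^{5/4}$.

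Third, I would attach the phase. Propagation over the distance $r$ contributes $e^{-\imagunit 2\pi r/\lambda}$, completing $\epsilon$. Because the antenna coherently sums the impinging tangential field over its $a\times a$ surface, the received complex channel coefficient is the surface integral of $\epsilon$; normalizing by $1/a$ and taking the squared modulus gives \eqref{eq:total_gain}. To pin down this normalization (and the constant $1/\sqrt{4\pi}$), I would verify the sanity limit: for a small antenna at broadside in the far field, $\epsilon\approx 1/(\sqrt{4\pi}\,d)$ is essentially constant, so $\frac{1}{a}\int\!\int\epsilon\,\partial x_r\partial y_r\approx \epsilon\,a$ and $|h_n|^2\approx a^2/(4\pi d^2)=A/(4\pi d^2)=\beta_d$, recovering \eqref{eq:beta-definition} and \eqref{eq:received_power_SISO}.

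I expect the main obstacle to be the polarization step: correctly specifying how the source's $\hat{\vect{y}}$-polarization is carried to a field transverse to the general propagation direction $\hat{\vect{k}}$, and projecting it onto the receive antenna's polarization, while keeping the model isotropic in radiated power. Obtaining the factor $1-(\hat{\vect{y}}\cdot\hat{\vect{k}})^2$ rather than, say, a dipole radiation pattern is the crux, and everything else reduces to geometric bookkeeping of the $\cos$-type projection factor and the propagation phase. A secondary subtlety is justifying coherent field integration with the $1/a$ (not $1/a^2$) normalization instead of incoherent power integration; the far-field check above is precisely what fixes this choice.
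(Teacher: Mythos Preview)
Your proposal is correct and arrives at the same formula, but by a different route than the paper. The paper's proof works top-down from electromagnetics: it writes the radiated field as $\mathbf{E}=\mathbf{G}(\vect{p}_r-\vect{p}_t)\mathbf{J}(\vect{p}_t)$ with the far-zone dyadic Green function $\mathbf{G}(\vect{r})\propto r^{-1}e^{-\imagunit 2\pi r/\lambda}(\mathbf{I}-\hat{\vect{r}}\hat{\vect{r}}^{\Htran})$ and a $Y$-directed current $\mathbf{J}=J_y\hat{\vect{u}}_y$, then reads off $|\epsilon|^2$ as the product of $A_t\cdot\tfrac{4}{\eta^2}\|\mathbf{G}_y\|^2$ and the $Z$-projection $d/r$, citing \cite{Dardari,Poon} for the intermediate expressions. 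Your approach is bottom-up: you assemble $\epsilon$ as (spherical spreading)$\times$(aperture obliquity)$\times$(polarization mismatch)$\times$(propagation phase), and then pin down the normalization by the Friis limit. The two give the same $1-(\hat{\vect{y}}\cdot\hat{\vect{k}})^2$ factor for complementary reasons: in the paper it is $\|(\mathbf{I}-\hat{\vect{k}}\hat{\vect{k}}^{\Htran})\hat{\vect{y}}\|^2$, i.e.\ a source-side reduction of the radiated field; in yours it is $|\hat{\vect{e}}\cdot\hat{\vect{y}}|^2$ with $\hat{\vect{e}}$ the unit transverse polarization, i.e.\ a receiver-side mismatch against a $\hat{\vect{y}}$-polarized antenna. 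Your reading is arguably more faithful to the ``isotropic transmitter'' hypothesis (a $Y$-current through the Green function is strictly a dipole pattern), and it surfaces the three-factor decomposition immediately, whereas the paper only makes that decomposition explicit later in the proof of Theorem~\ref{lemma2}. What the Green-function derivation buys is that no modeling choice about the receive polarization needs to be argued: the factor drops out of $\|\mathbf{G}_y\|^2$ automatically. Your far-field sanity check fixing the $1/a$ and $1/\sqrt{4\pi}$ constants is a useful addition that the paper's proof does not include.
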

 
 \begin{proof}
 The proof for this lemma is provided in Appendix \ref{app:proof_lemma1}.
 \end{proof}

The channel gain in \eqref{eq:total_gain} is computed as an integral of the electric field in \eqref{eq:complex_channel} over the antenna area. The signal phase should ideally be constant over the antenna so that the integration will coherently combine all the energy of the impinging signal. As illustrated in Fig.~\ref{fig:geometric_setup_matched_filtering}, the spherical wavefronts will create radial phase variations over the array, which are observable when the array is large compared to the propagation distance.
However, since the integral is computed on a per-antenna basis, we can mitigate this effect by creating ELAAs with sub-wavelength-sized antennas, so that the phase variation over each antenna is negligible (i.e., the antenna pattern is nearly isotropic). 
There will still be phase variations between the received signals at the different antennas, but these can be compensated for by digital receiver processing.
Under these conditions, we can achieve the following upper bound.

\begin{theorem} \label{lemma2}
The free-space LoS channel gain in \eqref{eq:total_gain} can be upper bounded as
\begin{align} \notag
&\left|h_n(\vect{p}_t)\right|^2 \leq \zeta_{\vect{p}_t,\vect{p}_n,a} = 
\\ &\frac{1}{4\pi} \sum_{x \in \mathcal{X}_{t,n}  } \sum_{ y \in \mathcal{Y}_{t,n}  } \left( \frac{\frac{xy}{d^2}}{3 \left(\frac{y^2}{d^2}+1\right)\sqrt{ \frac{x^2}{d^2}+\frac{y^2}{d^2}+1}} 
+ \frac{2}{3} \tan^{-1} \left(  \frac{\frac{xy}{d^2}}{\sqrt{ \frac{x^2}{d^2}+\frac{y^2}{d^2}+1}}
\right) \right) , \label{eq:channel-gain-general-case}
\end{align}
where $\mathcal{X}_{t,n} = \{ a/2+x_n-x_t,a/2-x_n+x_t \}$ and $\mathcal{Y}_{t,n}  = \{ a/2+y_n-y_t,a/2-y_n+y_t \}$.
\end{theorem}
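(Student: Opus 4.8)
The plan is to strip away the unit-modulus phase factor in \eqref{eq:complex_channel} with a single application of the Cauchy--Schwarz inequality, and then to reduce the theorem to a purely mechanical evaluation of the definite integral of $|\epsilon|^2$. Writing the inner integral in \eqref{eq:total_gain} as $\frac{1}{a}\int\!\!\int \epsilon\cdot 1\,\partial x_r \partial y_r$ and applying Cauchy--Schwarz against the constant function $1$ over the antenna region gives
\begin{equation*}
\left| \int\!\!\int \epsilon \, \partial x_r \partial y_r \right|^2 \leq \left( \int\!\!\int \partial x_r \partial y_r \right)\left( \int\!\!\int |\epsilon|^2 \, \partial x_r \partial y_r \right) = a^2 \int\!\!\int |\epsilon|^2 \, \partial x_r \partial y_r .
\end{equation*}
The area factor $a^2$ cancels the prefactor $1/a^2$, leaving $|h_n(\vect{p}_t)|^2 \leq \int\!\!\int |\epsilon|^2 \, \partial x_r \partial y_r$. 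This is exactly the regime described before the theorem: equality in Cauchy--Schwarz demands that $\epsilon$ be essentially constant over each antenna, which is what sub-wavelength elements with negligible intra-element phase variation achieve, so the bound is asymptotically tight.

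The remaining task is a definite-integral evaluation. From \eqref{eq:complex_channel}, $|\epsilon|^2 = \frac{d\,((x_r-x_t)^2+d^2)}{4\pi\,((x_r-x_t)^2+(y_r-y_t)^2+d^2)^{5/2}}$. I would substitute $u=x_r-x_t$ and $v=y_r-y_t$ to center the integrand, so the domain becomes $[x_n-x_t-\frac{a}{2},\,x_n-x_t+\frac{a}{2}]\times[y_n-y_t-\frac{a}{2},\,y_n-y_t+\frac{a}{2}]$. The key structural observation is that the integrand depends on $u,v$ only through $u^2,v^2$ and is therefore even in each variable; consequently the iterated antiderivative is odd in each of its two arguments, and the four corner evaluations of the rectangle combine with uniform plus signs. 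Since $-(x_n-x_t-\frac{a}{2})=\frac{a}{2}-x_n+x_t$ and likewise for $v$, this is precisely what produces the two-element sets $\mathcal{X}_{t,n},\mathcal{Y}_{t,n}$ and the double sum in \eqref{eq:channel-gain-general-case}.

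For the inner integral over $u$, the plan is to split the numerator as $u^2+d^2=(u^2+v^2+d^2)-v^2$, reducing it to the two standard forms $\int (u^2+c^2)^{-3/2}\,\partial u$ and $\int (u^2+c^2)^{-5/2}\,\partial u$ with $c^2=v^2+d^2$, both of which have elementary rational antiderivatives in $u$. After inserting the $u$-limits, the outer integration over $v$ is where the arctangent in \eqref{eq:channel-gain-general-case} arises: one meets an integrand of the form $\int \frac{\partial v}{(v^2+d^2)\sqrt{v^2+x^2+d^2}}$, whose antiderivative is $\frac{1}{dx}\tan^{-1}\!\big(\frac{vx/d}{\sqrt{v^2+x^2+d^2}}\big)$, matching the displayed $\tan^{-1}$ argument after the $1/d^2$ rescaling, while the leftover pieces integrate to the stated rational term with the $\frac{1}{3}$ and $\frac{2}{3}$ coefficients.

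The hard part will be the outer $v$-integration together with its bookkeeping. The inner $u$-step is routine, but it leaves a sum of several terms carrying mixed factors of $\sqrt{u^2+v^2+d^2}$ and powers of $v^2+d^2$, and recognizing which combination collapses to the clean rational-plus-arctangent form---while normalizing every constant to the $1/d^2$-scaled arguments of \eqref{eq:channel-gain-general-case}---is the delicate step. To guard against algebraic slips I would verify the final expression in the limit $a\to 0$, where both $\zeta_{\vect{p}_t,\vect{p}_n,a}$ and $|h_n(\vect{p}_t)|^2$ must reduce to the pointwise value $a^2|\epsilon(\vect{p}_t,\vect{p}_n)|^2$, and I would check the symmetry under swapping the roles of $\vect{p}_t$ and $\vect{p}_n$.
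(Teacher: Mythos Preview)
Your proposal is correct and follows essentially the same route as the paper: the paper's proof applies Cauchy--Schwarz in exactly the way you describe to obtain $|h_n(\vect{p}_t)|^2 \leq \int\!\!\int |\epsilon|^2\,\partial x_r\,\partial y_r$, writes out the resulting integrand (noting its decomposition into the three near-field factors), and then defers the closed-form evaluation of the double integral to \cite[App.~A]{Bjornson2}. Your outline of the integration---centering by $u=x_r-x_t$, $v=y_r-y_t$, exploiting even symmetry so the four corner evaluations combine with uniform signs over $\mathcal{X}_{t,n}\times\mathcal{Y}_{t,n}$, splitting $u^2+d^2=(u^2+v^2+d^2)-v^2$, and identifying the $\tan^{-1}$-producing outer integral---is a correct and more explicit rendering of that deferred computation.
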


\begin{proof}
The proof is given in Appendix~\ref{app:proof_lemma2}. 
\end{proof}

 \begin{figure}[t!]
        \centering 
	\begin{overpic}[width=0.8\columnwidth,tics=10]{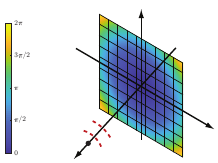}
	\put(24,8){Transmitter}
	\put(-3.5,67.3){Phase shifts}
	\put(96.5,19){$X$}
	\put(67,66){$Y$}
	\put(38,0){$Z$}
\end{overpic} 
                \caption{The spherical wavefronts create circular phase variations over the ELAA, which are substantial in the radiative near-field. This issue is addressed by using sub-wavelength-sized antennas so that the variation is negligible over each antenna.} 
                \label{fig:geometric_setup_matched_filtering}
\end{figure}

Theorem~\ref{lemma2} provides a closed-form upper bound on the channel gain by assuming negligible signal phase variations over the antenna area. The bound is tight for sub-wavelength-sized antennas, such as $a \le \lambda/4$  \cite{Bjornson2}, which will be assumed throughout this chapter.
There will also be amplitude variations between the antennas, which depend on the three aforementioned near-field properties. Their impact is not obvious from the channel gain formula in \eqref{eq:channel-gain-general-case}, but can be distinguished from \eqref{eq:pathloss-integral}.

The receiver can apply matched filtering to accumulate the received power of the $N$ antennas. The resulting total channel gain is $\sum_{n=1}^{N} \left|h_n(\vect{p}_t)\right|^2$, which we will show later in Section~\ref{subsec:system-model}.
If the transmitter is centered in front of the array, a compact formula  can be obtained as follows.

\begin{corollary} \label{cor:alpha_expression}
If the transmitter is centered in front of the planar array (i.e., $\vect{p}_t=(0,0,d)$), the received power is upper bounded by \vspace{-2mm}
\begin{equation} \label{eq:received_power_planar}
 \Prx^{\mathrm{planar}\textrm{-}N} = \sum_{n=1}^{N} \zeta_{\vect{p}_t,\vect{p}_n,a}  \Ptx = 
\alpha_{d,N} \Ptx,
\end{equation}
where the  total channel gain  (using
$\beta_d$ from \eqref{eq:beta-definition}) is
\begin{equation} \label{eq:alpha-expression} 
\alpha_{d,N} 
=   \frac{N \beta_d}{3(N \beta_d \pi+1) \sqrt{2 N \beta_d \pi + 1}} +  \frac{2}{3\pi}\tan^{-1} \!\left( \frac{N \beta_d \pi}{ \sqrt{2N \beta_d \pi + 1}} \right) .
\end{equation}
\end{corollary}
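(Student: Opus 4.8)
The plan is to obtain \eqref{eq:alpha-expression} by summing the per-antenna bound of Theorem~\ref{lemma2} over all $N$ antennas and exploiting a telescoping structure that leaves only the outer corners of the array. Setting $\vect{p}_t=(0,0,d)$ makes $x_t=y_t=0$, so the index sets collapse to $\mathcal{X}_{t,n}=\{a/2+x_n,\,a/2-x_n\}$ and $\mathcal{Y}_{t,n}=\{a/2+y_n,\,a/2-y_n\}$. Writing $g(x,y)$ for the bracketed summand in \eqref{eq:channel-gain-general-case}, the first thing I would record is the structural fact that $g$ is odd in $x$ and odd in $y$ \emph{separately}: in both terms the $x$-dependence enters only through an odd factor (the numerator $xy/d^2$ in the first term and the odd argument of $\tan^{-1}$ in the second), while all remaining factors depend on $x$ through $x^2$ alone; the same holds for $y$.

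Using this oddness I would replace $a/2-x_n$ by $-(x_n-a/2)$ and pull the sign through $g$, so that $\sum_{x\in\mathcal{X}_{t,n}}g(x,y)=g(x_n+a/2,y)-g(x_n-a/2,y)$, and analogously for the $y$-sum. This turns the per-antenna bound into the signed four-corner expression
\[
\zeta_{\vect{p}_t,\vect{p}_n,a}=\tfrac{1}{4\pi}\Big(g(x_n{+}\tfrac a2,y_n{+}\tfrac a2)-g(x_n{+}\tfrac a2,y_n{-}\tfrac a2)-g(x_n{-}\tfrac a2,y_n{+}\tfrac a2)+g(x_n{-}\tfrac a2,y_n{-}\tfrac a2)\Big),
\]
i.e.\ exactly a fundamental-theorem-of-calculus evaluation over the antenna box $[x_n-\tfrac a2,x_n+\tfrac a2]\times[y_n-\tfrac a2,y_n+\tfrac a2]$. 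By Assumption~1 these $N$ square cells tile the array $[-\tfrac{\sqrt{NA}}{2},\tfrac{\sqrt{NA}}{2}]^2$ edge-to-edge on the vertex grid $\xi_i=-\tfrac{\sqrt{NA}}{2}+i a$, $i=0,\ldots,\sqrt N$. The main obstacle is then the two-dimensional telescoping: I would verify that when the signed corner expression is summed over the whole grid, the net coefficient of every interior vertex and every non-corner boundary vertex is $(+1)-(1)-(1)+(1)=0$, so that only the four outermost corners survive with the signs $+,-,-,+$. This bookkeeping, confirming that the shared edges between adjacent cells cancel exactly, is the crux of the argument and the step I would write out most carefully.

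Once the collapse is established, the remainder is routine. The surviving sum is $\sum_{n=1}^N\zeta_{\vect{p}_t,\vect{p}_n,a}=\tfrac{1}{4\pi}\big(g(\tfrac{\sqrt{NA}}{2},\tfrac{\sqrt{NA}}{2})-g(\tfrac{\sqrt{NA}}{2},-\tfrac{\sqrt{NA}}{2})-g(-\tfrac{\sqrt{NA}}{2},\tfrac{\sqrt{NA}}{2})+g(-\tfrac{\sqrt{NA}}{2},-\tfrac{\sqrt{NA}}{2})\big)$, and invoking oddness in $x$ and $y$ once more shows that all four terms equal $g(\tfrac{\sqrt{NA}}{2},\tfrac{\sqrt{NA}}{2})$, giving $\sum_n\zeta_{\vect{p}_t,\vect{p}_n,a}=\tfrac1\pi\,g(\tfrac{\sqrt{NA}}{2},\tfrac{\sqrt{NA}}{2})$. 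Finally I would substitute $x=y=\tfrac{\sqrt{NA}}{2}$, note that $\tfrac{x^2}{d^2}=\tfrac{y^2}{d^2}=\tfrac{xy}{d^2}=\tfrac{NA}{4d^2}=N\pi\beta_d$ by \eqref{eq:beta-definition}, and simplify; the factor $\tfrac1\pi$ converts the two pieces of $g$ into $\tfrac{N\beta_d}{3(N\beta_d\pi+1)\sqrt{2N\beta_d\pi+1}}$ and $\tfrac{2}{3\pi}\tan^{-1}\!\big(\tfrac{N\beta_d\pi}{\sqrt{2N\beta_d\pi+1}}\big)$, which is precisely \eqref{eq:alpha-expression}, and multiplying by $\Ptx$ yields \eqref{eq:received_power_planar}.
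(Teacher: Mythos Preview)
Your argument is correct, but it takes a different route from the paper. The paper's proof observes that each $\zeta_{\vect{p}_t,\vect{p}_n,a}$ is, by construction (see the proof of Theorem~\ref{lemma2}), the integral of $|\epsilon(\vect{p}_t,\vect{p}_r)|^2$ over the $n$th antenna cell; summing over the $N$ edge-to-edge cells therefore gives the integral over the full $\sqrt{NA}\times\sqrt{NA}$ aperture by plain additivity of integrals, which is exactly the bound of Theorem~\ref{lemma2} applied once with $x_n=y_n=0$ and $a=\sqrt{NA}$. The substitution $NA/(4d^2)=N\pi\beta_d$ then yields \eqref{eq:alpha-expression} directly. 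Your approach instead stays at the level of the evaluated closed form \eqref{eq:channel-gain-general-case}: you recognise the summand $g(x,y)$ as odd in each variable, rewrite each per-antenna bound as a signed four-corner evaluation, and telescope across the tiling. This is the antiderivative-level mirror of the paper's integral-level additivity, and it is a legitimate alternative; its advantage is that it never needs to revisit the integral origin of $\zeta$, while the paper's version is shorter and more conceptual because it avoids the grid-cancellation bookkeeping altogether.
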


\begin{proof}
The upper bound is achieved by integrating the received power over the antenna array, which is equivalent to having one large antenna with no phase variations. Hence, we can obtain \eqref{eq:alpha-expression} from
Theorem~\ref{lemma2} by setting $x_t=y_t=0$, $x_n=y_n=0$, and $a=\sqrt{NA}$, in which case $\mathcal{X}_{t,n}  = \mathcal{Y}_{t,n}  = \{\sqrt{NA}/2,\sqrt{NA}/2\}$.
By replacing $d$ with $\sqrt{A/(4\pi \beta_d) }$ and rearranging the terms, we then obtain \eqref{eq:alpha-expression} from \eqref{eq:channel-gain-general-case}.
\end{proof}

The total channel gain in \eqref{eq:alpha-expression} 
is valid for arbitrarily large planar arrays and, particularly, supports communications in the radiative near-field. 

\subsection{Far-Field Approximation and Large-Array Limit}

The near-field compliant channel gain model that was derived above is also accurate in the far-field, thus there is no need to determine beforehand if the communication scenario corresponds to the near- or far-field.
We will show this by considering the two extremes: a small and a large array, compared to the propagation distance.

Suppose the planar array considered in Corollary~\ref{cor:alpha_expression} is in the far-field of the transmitter, in the sense that the distance is much larger than the array's diagonal: $d \gg \sqrt{2 N A}$.\footnote{The distance at which the far-field approximation of the total channel gain becomes accurate can be quantified based on the variations in the amplitude of the received signal over the array. This will be elaborated in Section \ref{sec:near-far-field} to get a more precise expression than  $d \gg \sqrt{2 N A}$.} In this case, $N\beta_d \pi +1 \approx 1$ and $ \sqrt{2N \beta_d \pi + 1} \approx 1$ in \eqref{eq:alpha-expression}.
By using the first-order Taylor approximation $\tan^{-1}(x) \approx x$, which is tight when the argument is close to zero (i.e., when $N \beta_d \pi$ is small), it follows from \eqref{eq:received_power_planar} that
\begin{equation} \label{eq:received_power_planar-approx}
\Prx^{\mathrm{planar}\textrm{-}N}  \approx \left( \frac{N \beta_d}{3} + \frac{2}{3\pi} N \beta_d \pi \right) \Ptx = N \beta_d \Ptx
\end{equation}
which is equal to $\Prx^{\textrm{spheric-}N}$ in  \eqref{eq:received_power_SISO2}. Hence,  the received power is proportional to $N$ for relatively small planar arrays, just as in the far-field. 

If $N$ grows large while $d$ is fixed, so that $d \ll \sqrt{2 N A}$, the far-field approximation is no longer valid.
We notice that as $N \to \infty$ it holds that
\begin{align}
 \frac{N \beta_d}{3(N \beta_d \pi+1) \sqrt{2 N \beta_d \pi + 1}}  &\to 0, \\
\tan^{-1} \!\left( \frac{N \beta_d \pi}{ \sqrt{2N \beta_d \pi + 1}} \right) &\to \frac{\pi}{2}.
\end{align}
 Hence, the received power in \eqref{eq:received_power_planar} saturates and has the asymptotic limit 
 \begin{equation} \label{eq:limit-planar-array}
 \Prx^{\mathrm{planar}\textrm{-}N} \to \frac{2}{3\pi} \frac{\pi}{2}  \Ptx
 = \frac{\Ptx}{3} \quad \textrm{as} \,\,N\to \infty. 
  \end{equation}
This asymptotic value is physically plausible since one third of the transmitted power is received. The reason that the limit is finite, although the array is infinitely large, is that each new receive antenna is deployed further away from the transmitter; the effective area (perpendicularly to the direction of propagation) becomes gradually smaller, and the polarization loss also increases. 

\begin{figure}[t!]
	\centering  \vspace{-3mm}
	\begin{overpic}[width=0.9\columnwidth,tics=10]{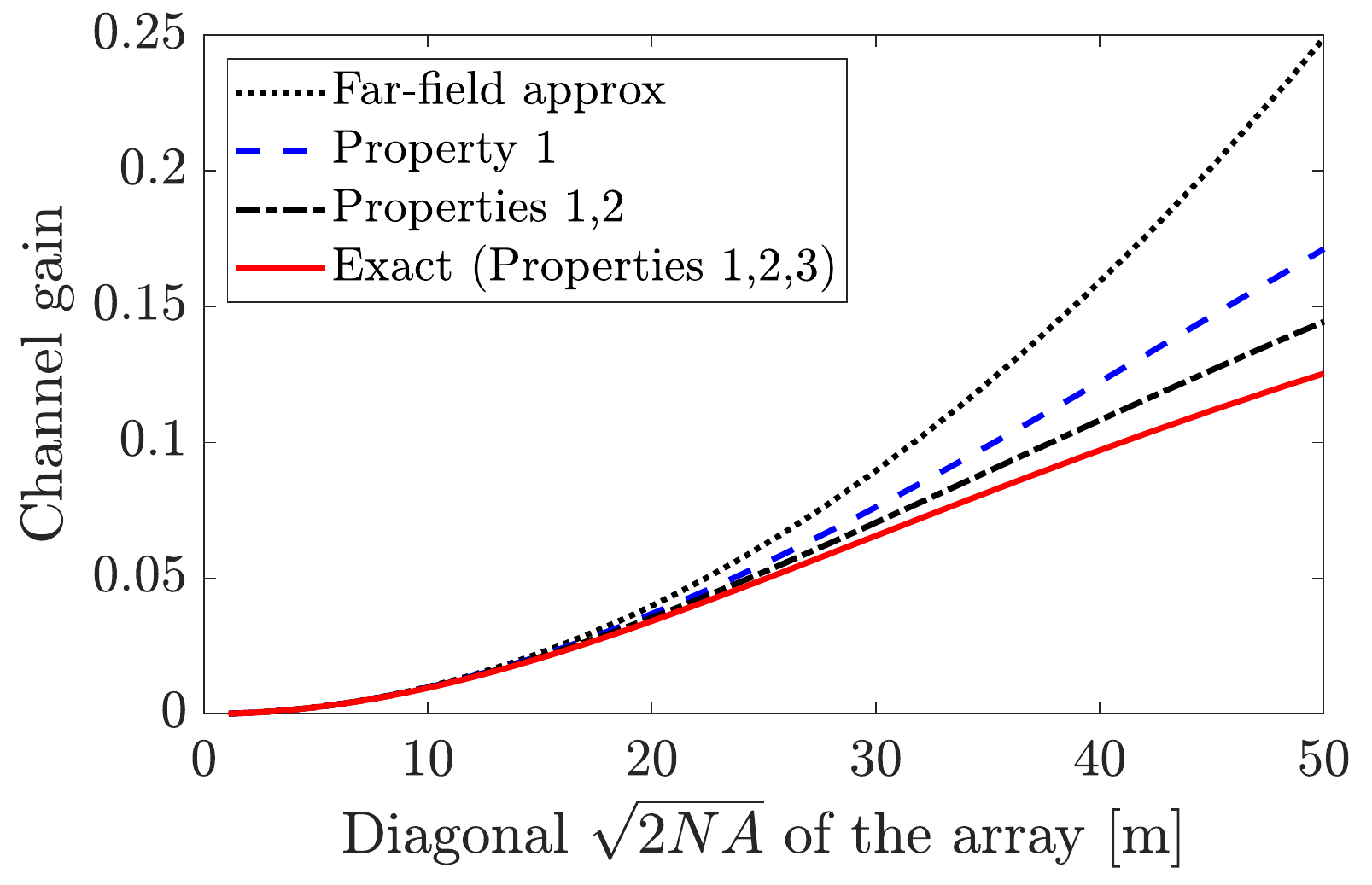}
\end{overpic} 
	\caption{The total channel gain $\alpha_{d,N}$ in \eqref{eq:alpha-expression} for a varying number of antennas, leading to a varying diagonal of the planar ELAA. 
	The exact curve is compared with approximations that are obtained by neglecting all or some of the essential  propagation properties in the radiative near-field.} \vspace{-1mm}
	\label{fig:near-field_modeling} 
\end{figure}

The literature contains alternative channel gain expressions designed for the radiative near-field. 
The models in \cite{Tang,Garcia,Ellingson} only capture the first near-field property (i.e., different distances to the antennas). The models in \cite{Hu,Bjornson3} also capture the second property (i.e., variation of effective antenna areas over the array).
By contrast, the exact model in \eqref{eq:channel-gain-general-case} and \eqref{eq:alpha-expression} also includes the polarization 
mismatch over the array.
Fig.~\ref{fig:near-field_modeling} shows the importance of including all three properties when studying the radiative near-field. The transmitter is $d=20$\,m in front of the array with $x_t=y_t=0$. We vary the number of antennas $N$ and plot the total channel gain in \eqref{eq:alpha-expression} as a function of the diagonal $\sqrt{2NA}$ of the array.
The exact formula and far-field approximation coincide when the diagonal is smaller than $d/2=10$\,m, but diverge for larger arrays.
If one only considers some of the near-field properties, the channel gain becomes overestimated. The gaps are rather small in this figure, but grow as $N\to \infty$. 
The channel gain converges to $1/2$ instead of $1/3$ if the polarization effects are neglected (``Properties 1,2'') \cite{Hu,Bjornson3}. The intuition is that an infinitely large array divides the world into two halves and, therefore, half of the power reaches the array. However, the polarization becomes increasingly mismatched for far-away antennas so one third of the total incident power cannot be received.
If also the variations in effective areas are neglected (``Property 1''), the channel gain diverges as $N\to \infty$ \cite{Bjornson2}. If we further assume that $\lambda=0.1$\,m (i.e., communication in the 3 GHz band) and $A=(\lambda/4)^2$, then the near-field formulas are required for $N > 10^5$.

In summary, many channel gain formulas are accurate for small and medium-sized arrays, but one must take all the three near-field properties into account when the array's diagonal is similar to or larger than the propagation distance.

\subsection{System Model for Uplink and Downlink}
\label{subsec:system-model}

Being equipped with a near-field compliant channel gain formula, we are ready to define the system model and study the achievable spectral efficiency (SE). 

\begin{example}{Assumption 2}The transmitter is at $\vect{p}_t= (d \sin(\varphi), 0, d \cos(\varphi))$ in the $XZ$-plane with distance $d$ from the array's center and angle $\varphi \in [-\pi/2,\pi/2]$, as illustrated in Fig.~\ref{fig:from_above}. It sends a signal that has polarization in the $Y$-direction when traveling in the $Z$-direction.
\end{example}

The flat-fading channel between the single-antenna transmitter and $N$-antenna receiver is represented by the vector $\vect{h} = [h_1,\ldots,h_N]^{\Ttran} \in \mathbb{C}^N$, where $h_n=|h_n| e^{-j\phi_n}$ is the channel from the transmitter to the $n$th receive antenna.
The channel gain $|h_n|^2 \in [0,1]$ can be computed using Theorem~\ref{lemma2}, while the phase-shift  $\phi_n \in [0,2\pi]$ can be computed based on the propagation delay as
\begin{align} 
\phi_n \!=\! 2\pi \cdot\mod \left( \! \frac{{||\vect{p}_t- \vect{p}_n||}}{\lambda}  ,1 \!\right) \!=\! 2\pi \cdot\mod \left( \!\frac{\sqrt{x_n^2+y_n^2+d^2 - 2d x_n \sin(\varphi)}}{\lambda}  ,1 \!\right) \!.
\end{align}
The received uplink signal $\vect{r} \in \mathbb{C}^N$ at the ELAA is   
\begin{equation}
\vect{r} = \vect{h} \sqrt{\Ptx} s +  \vect{n}  
\end{equation}
where $\Ptx$ is the transmit power, $s$ is the unit-norm information signal, and $\vect{n} \sim \CN(\vect{0},\sigma^2 \vect{I}_N)$ is the independent receiver noise with variance $\sigma^2$. Under the assumption of perfect channel knowledge, linear receiver processing is optimal \cite{Telatar,massivemimobook} and we let $\vect{v} \in \mathbb{C}^N$ denote the receive combining vector. 
It is well-known that the maximum SNR is achieved by matched filtering (MF) with $\vect{v} = \vect{h}/\|\vect{h} \|$ \cite{massivemimobook}, which is also known as maximum ratio combining. The SE is
\begin{equation} \label{eq:SE-mMIMO}
\log_2(1+\mathrm{SNR}_{\mathrm{MF}} ) \quad \textrm{[bit/s/Hz]}
\end{equation}
with 
\begin{equation} \label{eq:SNR-mMIMO}
\mathrm{SNR}_{\mathrm{MF}} =  \frac{|\vect{v}^{\Htran}\vect{h}|^2 \Ptx }{ \| \vect{v}\|^2 \sigma^2} = \|\vect{h} \|^2 \frac{ \Ptx}{\sigma^2}  =  \left(\sum_{n=1}^{N}|h_n|^2\right)\frac{ \Ptx}{\sigma^2}.
\end{equation} 

\begin{figure}[t!]
\centering
\begin{overpic}[width=.8\columnwidth,tics=10]{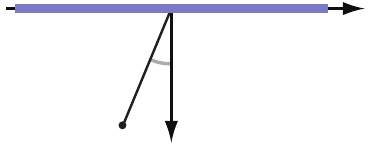}
	\put(30,40){Receiving planar array}
	\put(34,18){$d$}
	\put(41,21){$\varphi$}
	\put(48,1){$Z$}
	\put(95,40){$X$}
	\put(15,6){Transmitter}
	\put(5,2){$\vect{p}_t= (d \sin(\varphi), 0, d \cos(\varphi))$}
\end{overpic}
        \caption{Under Assumption 2, the transmitter is at distance $d$ from the center of the receiver and with  angle $\varphi$ in the $XZ$-plane.} 
                \label{fig:from_above} \vspace{-4mm}
\end{figure}

We can also consider the corresponding downlink scenario, where the ELAA transmits to a single-antenna isotropic receiver. Reciprocity implies that $\vect{h}$ is the channel vector also in this case.
We let $\Ptx$ denote the transmit power and $\vect{w} \in \mathbb{C}^N$ be the unit-norm linear precoding vector. The received downlink signal $r \in \mathbb{C}$ is 
\begin{equation}
    r = \vect{h}^{\Ttran} \vect{w} \sqrt{\Ptx} s + n,
\end{equation}
where $s$ is the unit-norm information signal and $n \sim \CN(0,\sigma^2 )$ is the independent receiver noise.
The corresponding downlink SNR is 
\begin{equation} \label{eq:SNR-mMIMO-downlink}
\frac{|\vect{h}^{\Ttran}\vect{w}|^2 \Ptx }{ \sigma^2} \leq \|\vect{h} \|^2 \frac{ \Ptx}{\sigma^2}  = \mathrm{SNR}_{\mathrm{MF}},
\end{equation} 
where the upper bound is achieved by the MF precoding $\vect{w}=\vect{h}^{\star} / \| \vect{h} \|$ \cite{massivemimobook} and $^{\star}$ denotes conjugation. It is also known as maximum ratio transmission. Importantly, the same SNR value $\mathrm{SNR}_{\mathrm{MF}}$ and SE $\log_2(1+\mathrm{SNR}_{\mathrm{MF}})$ are achieved in uplink and downlink, so we can study these cases jointly.

\subsection{SNR Expressions and Power Scaling Law}
\label{subsec:scaling_law}

We can compute the uplink/downlink SNR and channel gain numerically using Lemma~\ref{lemma1}.
When the array consists of physically small antennas, we can use Theorem~\ref{lemma2} to compute the SNR in closed form as follows.

\begin{theorem} \label{prop:SNR-mMIMO2}
When having small antennas, the SNR in \eqref{eq:SNR-mMIMO} with MF becomes
\begin{equation} \label{eq:SNR_mMIMO_general}
\mathrm{SNR}_{\mathrm{MF}}  =  \xi_{d,\varphi,N}\frac{ \Ptx}{\sigma^2} 
\end{equation}
where the total channel gain $\xi_{d,\varphi,N} $ is given by
\begin{align} \notag
\xi_{d,\varphi,N} =& \sum_{i=1}^{2} \Bigg( \frac{ B +(-1)^i \sqrt{B} \tan(\varphi)  }{6 \pi (B+1)\sqrt{ 2B+\tan^2(\varphi) + 1 + 2(-1)^i \sqrt{B} \tan(\varphi)} } \\ &+ \frac{1}{3\pi} \tan^{-1} \Bigg(  \frac{ 
B +(-1)^i \sqrt{B} \tan(\varphi) 
  }{
  \sqrt{2B+\tan^2(\varphi) + 1 + 2(-1)^i \sqrt{B} \tan(\varphi)}
  }
\Bigg) \Bigg) \label{eq:xi-mMIMO}
\end{align}
and $B = N \pi \beta_{d \cos(\varphi)} = \frac{NA}{4 d^2 \cos^2(\varphi)}$.

\end{theorem}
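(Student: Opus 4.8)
The plan is to reduce this statement to the closed-form channel-gain bound of Theorem~\ref{lemma2}, exactly as was done for the centered case in Corollary~\ref{cor:alpha_expression}, and then to carry out the geometric substitution dictated by Assumption~2. First I would recall from \eqref{eq:SNR-mMIMO} that with MF the SNR equals $\|\vect{h}\|^2 \Ptx/\sigma^2 = \big(\sum_{n=1}^N |h_n|^2\big)\Ptx/\sigma^2$, so it suffices to identify the total channel gain $\xi_{d,\varphi,N}$ with $\sum_{n=1}^N |h_n|^2$. As in the proof of Corollary~\ref{cor:alpha_expression}, the matched-filter combination of the per-antenna gains is upper bounded by the gain of a single aperture-filling antenna with no phase variation over its surface, and for sub-wavelength antennas this bound is tight. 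Hence I would replace the sum of per-antenna gains by a single application of Theorem~\ref{lemma2} with the array centered at the origin, $x_n = y_n = 0$, and side length $a = \sqrt{NA}$.

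The key step is to read off the correct geometry. Under Assumption~2 the transmitter sits at $\vect{p}_t = (d\sin(\varphi),0,d\cos(\varphi))$, so in the notation of Theorem~\ref{lemma2}---where the transmitter height is the $Z$-coordinate---the relevant axial distance is $d\cos(\varphi)$, while $x_t = d\sin(\varphi)$ and $y_t = 0$. With these values and $x_n=y_n=0$, $a=\sqrt{NA}$, the index sets become $\mathcal{X}_{t,n} = \{\sqrt{NA}/2 - d\sin(\varphi),\, \sqrt{NA}/2 + d\sin(\varphi)\}$ and $\mathcal{Y}_{t,n} = \{\sqrt{NA}/2,\, \sqrt{NA}/2\}$. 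The set $\mathcal{Y}_{t,n}$ is a repeated singleton, so the inner sum over $y$ in \eqref{eq:channel-gain-general-case} merely contributes a factor of $2$ and collapses, whereas the two distinct elements of $\mathcal{X}_{t,n}$ become the summation index $i\in\{1,2\}$ with $x_i = \sqrt{NA}/2 + (-1)^i d\sin(\varphi)$; this is what produces the $(-1)^i$ structure in \eqref{eq:xi-mMIMO}, which was absent in the centered case of Corollary~\ref{cor:alpha_expression}.

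It then remains to normalize the variables by $d\cos(\varphi)$ and introduce $B = N\pi\beta_{d\cos(\varphi)} = NA/(4d^2\cos^2(\varphi))$. With $y = \sqrt{NA}/2$ one gets $y/(d\cos(\varphi)) = \sqrt{B}$ and $x_i/(d\cos(\varphi)) = \sqrt{B} + (-1)^i \tan(\varphi)$, whence the cross term $x_i y/(d\cos(\varphi))^2$ reduces to $B + (-1)^i\sqrt{B}\tan(\varphi)$, the factor $y^2/(d\cos(\varphi))^2 + 1$ to $B+1$, and the radicand $x_i^2/(d\cos(\varphi))^2 + y^2/(d\cos(\varphi))^2 + 1$ to $2B + \tan^2(\varphi) + 1 + 2(-1)^i\sqrt{B}\tan(\varphi)$. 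Substituting these into \eqref{eq:channel-gain-general-case} and absorbing the prefactor $\tfrac{1}{4\pi}\cdot 2 = \tfrac{1}{2\pi}$ (which turns the $\tfrac13$ into $\tfrac{1}{6\pi}$ and the $\tfrac23$ into $\tfrac{1}{3\pi}$) yields \eqref{eq:xi-mMIMO} term by term.

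I expect the main obstacle to be purely the geometric bookkeeping rather than any hard analysis: one must correctly recognize that Theorem~\ref{lemma2} measures the axial distance as the $Z$-coordinate, so that the effective distance is $d\cos(\varphi)$ and not $d$, and that the nonzero offset $x_t = d\sin(\varphi)$ is what splits the two $x$-terms apart and generates the angular dependence through $\tan(\varphi)$. Once the identification $B = NA/(4d^2\cos^2(\varphi))$ is in place, the remaining simplification is routine algebra.
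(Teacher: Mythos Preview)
Your proposal is correct and follows exactly the approach indicated in the paper: apply Theorem~\ref{lemma2} with $\vect{p}_t=(d\sin(\varphi),0,d\cos(\varphi))$, $\vect{p}_n=(0,0,0)$, and $a=\sqrt{NA}$, then rewrite everything in terms of $B = NA/(4d^2\cos^2(\varphi))$. The paper states only that one-line reduction, whereas you have spelled out the bookkeeping (the collapse of the $\mathcal{Y}_{t,n}$ sum, the identification of $d\cos(\varphi)$ as the axial distance, and the substitutions yielding the $(-1)^i\sqrt{B}\tan(\varphi)$ terms) in detail; all of it checks out.
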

\begin{proof}
This result follows from Theorem~\ref{lemma2} with $\vect{p}_t=(d \sin(\varphi), 0, d \cos(\varphi))$, $\vect{p}_n=(0,0,0)$, and $a = \sqrt{NA}$.
\end{proof}

Note that the channel gain in \eqref{eq:xi-mMIMO} depends only on the total array area $NA$, thus the choice of frequency band only affects how many antennas are needed to achieve that area.
By using Corollary~\ref{cor:alpha_expression}, a more compact expression can be obtained for the case where the transmitter is centered in front of the array (i.e., $\varphi=0$).

\begin{corollary} \label{cor:SNR_mMIMO}
When the transmitter has angle $\varphi=0$, the SNR in \eqref{eq:SNR_mMIMO_general} simplifies to 
\begin{equation} \label{eq:SNR-mMIMO_planar}
\mathrm{SNR}_{\mathrm{MF}} = \alpha_{d,N}\frac{\Ptx}{\sigma^2}
\end{equation}
where the total channel gain $\alpha_{d,N}$ is given in \eqref{eq:alpha-expression}.
\end{corollary}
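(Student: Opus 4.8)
The plan is to obtain the corollary by specializing the general SNR formula of Theorem~\ref{prop:SNR-mMIMO2} to the broadside case $\varphi=0$, where the geometry is symmetric and the two summands in \eqref{eq:xi-mMIMO} collapse onto a single term. Since the small-antenna assumption of Theorem~\ref{prop:SNR-mMIMO2} is inherited here, I am specializing an exact equality and need not revisit the upper-bound subtlety of Theorem~\ref{lemma2}.

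First I would evaluate the auxiliary quantity $B = N\pi\beta_{d\cos(\varphi)} = \frac{NA}{4 d^2 \cos^2(\varphi)}$ at $\varphi=0$. Because $\cos(0)=1$, this gives $B = N\pi\beta_d = \frac{NA}{4d^2}$, which is exactly the combination $N\beta_d\pi$ appearing throughout \eqref{eq:alpha-expression}. I would also record that $\tan(0)=0$, so every occurrence of the term $(-1)^i\sqrt{B}\tan(\varphi)$ in \eqref{eq:xi-mMIMO}, both in the numerators and inside the square roots, vanishes identically.

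Next I would substitute these values into \eqref{eq:xi-mMIMO}. With $\tan(\varphi)=0$ the summand no longer depends on the index $i$, so the sum over $i\in\{1,2\}$ simply doubles a common term, yielding
\begin{equation}
\xi_{d,0,N} = \frac{B}{3\pi(B+1)\sqrt{2B+1}} + \frac{2}{3\pi}\tan^{-1}\!\left(\frac{B}{\sqrt{2B+1}}\right).
\end{equation}
I would then replace $B$ by $N\beta_d\pi$. In the first term the factor $\pi$ from the numerator cancels one of the two $\pi$'s created in the denominator by the doubling of $6\pi$ to $3\pi$, leaving $\frac{N\beta_d}{3(N\beta_d\pi+1)\sqrt{2N\beta_d\pi+1}}$, while the arctangent term becomes $\frac{2}{3\pi}\tan^{-1}(N\beta_d\pi/\sqrt{2N\beta_d\pi+1})$. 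This is precisely $\alpha_{d,N}$ as defined in \eqref{eq:alpha-expression}, so $\xi_{d,0,N}=\alpha_{d,N}$ and the SNR expression \eqref{eq:SNR_mMIMO_general} reduces to \eqref{eq:SNR-mMIMO_planar}.

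There is no genuine obstacle here, since the result is a direct specialization; the only care required is bookkeeping the factors of $\pi$ so that the $6\pi$ denominator of \eqref{eq:xi-mMIMO} correctly becomes the factor $3$ of \eqref{eq:alpha-expression} after the doubling and the substitution $B=N\beta_d\pi$. As an independent cross-check, one could instead invoke Corollary~\ref{cor:alpha_expression} directly: for a centered transmitter it already establishes $\sum_{n=1}^N \zeta_{\vect{p}_t,\vect{p}_n,a}=\alpha_{d,N}$, and since the small-antenna assumption makes $|h_n(\vect{p}_t)|^2=\zeta_{\vect{p}_t,\vect{p}_n,a}$, the SNR definition \eqref{eq:SNR-mMIMO} gives $\mathrm{SNR}_{\mathrm{MF}}=\big(\sum_{n=1}^N|h_n(\vect{p}_t)|^2\big)\Ptx/\sigma^2=\alpha_{d,N}\Ptx/\sigma^2$, confirming the same conclusion.
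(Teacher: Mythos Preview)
Your proposal is correct. The paper does not give a separate proof for this corollary; it simply notes before the statement that it follows ``By using Corollary~\ref{cor:alpha_expression}'', which is exactly your cross-check route. Your primary route---directly setting $\varphi=0$ in \eqref{eq:xi-mMIMO} and simplifying---is an equally valid and slightly more explicit alternative that verifies the consistency between Theorem~\ref{prop:SNR-mMIMO2} and Corollary~\ref{cor:alpha_expression}; the bookkeeping you describe (doubling $1/(6\pi)$ to $1/(3\pi)$ and then cancelling the $\pi$ from $B=N\beta_d\pi$) is correct.
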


We  use the expression in Theorem~\ref{prop:SNR-mMIMO2} for an arbitrary $\varphi$ to study the far-field behavior in the next corollary. Note that $d \cos(\varphi)$ is the distance from the transmitter to the plane where the array is deployed and $\sqrt{2 N A}$ is the array's diagonal.

\begin{corollary}[Far-field approximation]If the transmitter is in the far-field of the ELAA, in the sense that $d \cos(\varphi) \gg \sqrt{2 N A}$, then \eqref{eq:SNR-mMIMO_planar} is well approximated as
\begin{equation} \label{eq:SNR-mMIMO_planar-farfield}
\mathrm{SNR}_{\mathrm{MF}} \approx \mathrm{SNR}_{\mathrm{MF}}^{\mathrm{ff}}  = N  \varsigma_{d,\varphi} \frac{ \Ptx}{\sigma^2} 
\end{equation}
where 
\begin{equation} \label{eq:updated-beta}
\varsigma_{d,\varphi} =
\beta_{d \cos(\varphi)} \cos^3(\varphi)
\end{equation}
and $\beta_{d \cos(\varphi)}$ is the same as in \eqref{eq:beta-definition} but with the distance $d \cos(\varphi)$.
\label{cor:far-field_mMIMO}
\end{corollary}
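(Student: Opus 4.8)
The plan is to start from the exact closed-form channel gain $\xi_{d,\varphi,N}$ in Theorem~\ref{prop:SNR-mMIMO2} and show that it reduces to $N\varsigma_{d,\varphi}$ in the stated regime. First I would translate the far-field hypothesis into a statement about the single dimensionless quantity $B=\frac{NA}{4d^2\cos^2(\varphi)}$ that governs \eqref{eq:xi-mMIMO}: squaring $d\cos(\varphi)\gg\sqrt{2NA}$ gives $d^2\cos^2(\varphi)\gg 2NA$, hence $B\ll 1/8$, so the far-field condition is precisely the smallness of $B$. I would then introduce $u=\sqrt{B}$ and $t=\tan(\varphi)$ as the working parameters and record the identity $1+t^2=1/\cos^2(\varphi)$, which cleans up every square root appearing in \eqref{eq:xi-mMIMO}.

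The key structural observation is that the rational summand and the arctangent summand in \eqref{eq:xi-mMIMO} share the same argument. Writing $s_i=(-1)^i$ and $g_i=\frac{u^2+s_i u t}{\sqrt{1+t^2+2u^2+2s_i u t}}$, the first term equals exactly $\frac{g_i}{6\pi(1+u^2)}$, so that $\xi_{d,\varphi,N}=\frac{g_1+g_2}{6\pi(1+u^2)}+\frac{1}{3\pi}\bigl(\tan^{-1}(g_1)+\tan^{-1}(g_2)\bigr)$. Since $g_i=O(u)$ is small in the far field, I would apply the first-order approximation $\tan^{-1}(g_i)\approx g_i$ (the same one used after \eqref{eq:alpha-expression}) and thereby reduce the entire problem to computing $g_1+g_2$ to order $u^2$. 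Taylor expanding each $g_i$ yields $g_i=\cos(\varphi)\,s_i t\,u+\bigl(\cos(\varphi)-\cos^3(\varphi)t^2\bigr)u^2+O(u^3)$.

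The crucial step is the summation over $i$: the $O(u)=O(\sqrt{B})$ contributions carry the factor $s_1+s_2=0$ and cancel, which is exactly what makes the limit analytic in $B$ rather than in $\sqrt{B}$. What survives is $g_1+g_2=2\bigl(\cos(\varphi)-\cos^3(\varphi)t^2\bigr)u^2+O(u^3)$, and the trigonometric simplification $1-\cos^2(\varphi)\tan^2(\varphi)=\cos^2(\varphi)$ collapses the coefficient to $2\cos^3(\varphi)$. Substituting back into both summands (and using $1/(1+u^2)\approx 1$ to leading order) gives $\xi_{d,\varphi,N}\approx\frac{2\cos^3(\varphi)u^2}{6\pi}+\frac{2\cos^3(\varphi)u^2}{3\pi}=\frac{\cos^3(\varphi)}{\pi}u^2=\frac{B\cos^3(\varphi)}{\pi}$, and inserting $B=N\pi\beta_{d\cos(\varphi)}$ produces $\xi_{d,\varphi,N}\approx N\beta_{d\cos(\varphi)}\cos^3(\varphi)=N\varsigma_{d,\varphi}$, with the special case $\varphi=0$ recovering $N\beta_d$ in agreement with \eqref{eq:received_power_planar-approx}. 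The main obstacle is bookkeeping rather than conceptual: because each $g_i$ has a leading $O(\sqrt{B})$ part, the expansion must be carried to second order in $u$ and the $\sqrt{B}$-cancellation verified explicitly, since overlooking it would incorrectly suggest an answer of order $\sqrt{B}$.
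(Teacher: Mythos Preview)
Your proposal is correct and follows essentially the same approach as the paper: both arguments reduce \eqref{eq:xi-mMIMO} via the small-$B$ approximations $\tan^{-1}(x)\approx x$ and $\sqrt{1+x}\approx 1+x/2$, then invoke $1+\tan^2(\varphi)=1/\cos^2(\varphi)$ to obtain $\xi_{d,\varphi,N}\approx B\cos^3(\varphi)/\pi=N\varsigma_{d,\varphi}$. Your organization is slightly tidier in that you first identify the common argument $g_i$ shared by the rational and arctangent summands and then perform a single Taylor expansion in $u=\sqrt{B}$, making the $O(\sqrt{B})$ cancellation explicit, whereas the paper applies the approximations sequentially and combines the two resulting fractions over a common denominator; the mathematical content is identical.
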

\begin{proof}
The derivation can be found in Appendix~\ref{app:proof_cor:far-field_mMIMO}.
\end{proof}

From Corollary~\ref{cor:far-field_mMIMO}, we notice that the far-field SNR in \eqref{eq:SNR-mMIMO_planar-farfield} is proportional to $N$. Hence, when  $N$ increases, the system can either benefit from a linearly increasing SNR or reduce $\Ptx$ as $1/N$ to keep the SNR constant. The latter is the classical \emph{power scaling law} for mMIMO that first appeared in \cite{Ngo,Hoydis} and has since appeared in numerous papers.
However, when computing the asymptotic behavior as $N \to \infty$, these seminal works implicitly assume the transmitter is always in the far-field and, thus, that the SNR grows linearly towards infinity as $N \to \infty$ for any fixed transmit power.
This is not physically possible because the receiver will have to receive more power than was transmitted. As $N$ increases, the transmitter will eventually be in the near-field of the ELAA, and then the total channel gain will saturate according to the formula in Theorem~\ref{lemma2}. The following is an accurate but less encouraging result.

\begin{corollary}[Asymptotic power scaling law] As $N \to \infty$ with a constant transmit power $\Ptx$, the SNR with MF satisfies
\begin{equation} \label{eq:limit-mMIMO}
\mathrm{SNR}_{\mathrm{MF}} \to \frac{1}{3}\frac{\Ptx}{ \sigma^2}.
\end{equation}
If the transmit power is reduced with $N$ according to the power scaling law $\Ptx = P/N^{\rho}$ for some constant $P>0$ and exponent $\rho>0$, then it holds that
\begin{equation} \label{eq:limit-mMIMO2}
\mathrm{SNR}_{\mathrm{MF}} = \xi_{d,\varphi,N} \frac{ P}{\sigma^2 N^{\rho}} \to 0, \quad \textrm{as }\, N \to \infty.
\end{equation}
\label{cor:limit_mMIMO}
\end{corollary}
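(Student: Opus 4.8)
The plan is to work directly from the closed-form total channel gain $\xi_{d,\varphi,N}$ in \eqref{eq:xi-mMIMO}, since Theorem~\ref{prop:SNR-mMIMO2} already gives $\mathrm{SNR}_{\mathrm{MF}} = \xi_{d,\varphi,N}\,\Ptx/\sigma^2$. The entire $N$-dependence of $\xi_{d,\varphi,N}$ enters through $B = N\pi\beta_{d\cos(\varphi)} = \frac{NA}{4d^2\cos^2(\varphi)}$, which, for fixed $d$, $A$ and any fixed $\varphi \in (-\pi/2,\pi/2)$, grows without bound as $N \to \infty$. So the whole task reduces to computing $\lim_{B\to\infty}\xi_{d,\varphi,N}$ and then handling the two power-scaling regimes.

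First I would treat the angle term $t := \tan(\varphi)$ as a fixed finite constant and examine the two summands ($i=1,2$) separately, each of which has a rational piece and an $\tan^{-1}$ piece. For the rational piece, the numerator $B + (-1)^i\sqrt{B}\,t$ is $O(B)$ while the denominator $6\pi(B+1)\sqrt{2B + t^2 + 1 + 2(-1)^i\sqrt{B}\,t}$ is $O(B^{3/2})$, so each rational piece behaves like $\tfrac{1}{6\pi\sqrt{2B}}$ and vanishes. For the $\tan^{-1}$ piece, its argument $\frac{B + (-1)^i\sqrt{B}\,t}{\sqrt{2B + t^2 + 1 + 2(-1)^i\sqrt{B}\,t}}$ has numerator $\sim B$ and denominator $\sim\sqrt{2B}$, hence grows like $\sqrt{B/2}\to\infty$, so each $\tan^{-1}$ tends to $\pi/2$ and the piece tends to $\frac{1}{3\pi}\cdot\frac{\pi}{2} = \frac{1}{6}$.

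Summing the two vanishing rational pieces and the two $\tfrac16$ contributions gives $\xi_{d,\varphi,N}\to \tfrac13$, which immediately yields \eqref{eq:limit-mMIMO} upon multiplying by the fixed factor $\Ptx/\sigma^2$. As a consistency check, at $\varphi=0$ this agrees with Corollary~\ref{cor:SNR_mMIMO} together with the already-established limit $\alpha_{d,N}\to\tfrac13$ in \eqref{eq:limit-planar-array}. For the power-scaling claim \eqref{eq:limit-mMIMO2}, I would substitute $\Ptx = P/N^\rho$ into $\mathrm{SNR}_{\mathrm{MF}} = \xi_{d,\varphi,N}\,P/(\sigma^2 N^\rho)$; since $\xi_{d,\varphi,N}$ converges to the finite constant $\tfrac13$ while $P/(\sigma^2 N^\rho)\to 0$ for every $\rho>0$, the product tends to $0$ by the algebra of limits.

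The main (and essentially only) obstacle is making the dominant-balance argument rigorous in the presence of the $(-1)^i\sqrt{B}\,t$ cross terms: one must verify that these $O(\sqrt{B})$ contributions are genuinely subdominant to the $B$ and $2B$ leading terms inside both the numerator and the radical, so that they do not alter the leading asymptotics and give the same limit for $i=1$ and $i=2$. Factoring $B$ out of the numerator and $2B$ out of the radical, writing the radicand as $2B\bigl(1 + (-1)^i t/\sqrt{B} + O(1/B)\bigr)$, makes this subleading nature explicit. The degenerate endpoints $\varphi=\pm\pi/2$, where $t$ blows up and the transmitter lies in the array plane, are excluded throughout.
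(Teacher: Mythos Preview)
Your proposal is correct and follows essentially the same approach as the paper: the paper's proof simply states that the limit in \eqref{eq:limit-mMIMO} is ``computed similarly to the finite limit in \eqref{eq:limit-planar-array}'' and that \eqref{eq:limit-mMIMO2} follows because $P\xi_{d,\varphi,N}$ has a finite limit while $1/N^{\rho}\to 0$. Your write-up is exactly this argument carried out in detail for the general-$\varphi$ expression \eqref{eq:xi-mMIMO}, with the same decomposition into a vanishing rational term and an $\tan^{-1}$ term tending to $\pi/2$.
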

\begin{proof}
The limit in \eqref{eq:limit-mMIMO} is computed similarly to the finite limit in \eqref{eq:limit-planar-array}. The result in \eqref{eq:limit-mMIMO2} follows from that $P \xi_{d,\varphi,N} $ has a finite limit and $1/N^{\rho}\to 0$ as $N \to \infty$.
\end{proof}

\begin{figure}[t!]
	\centering 
	\begin{overpic}[width=0.9\columnwidth,tics=10]{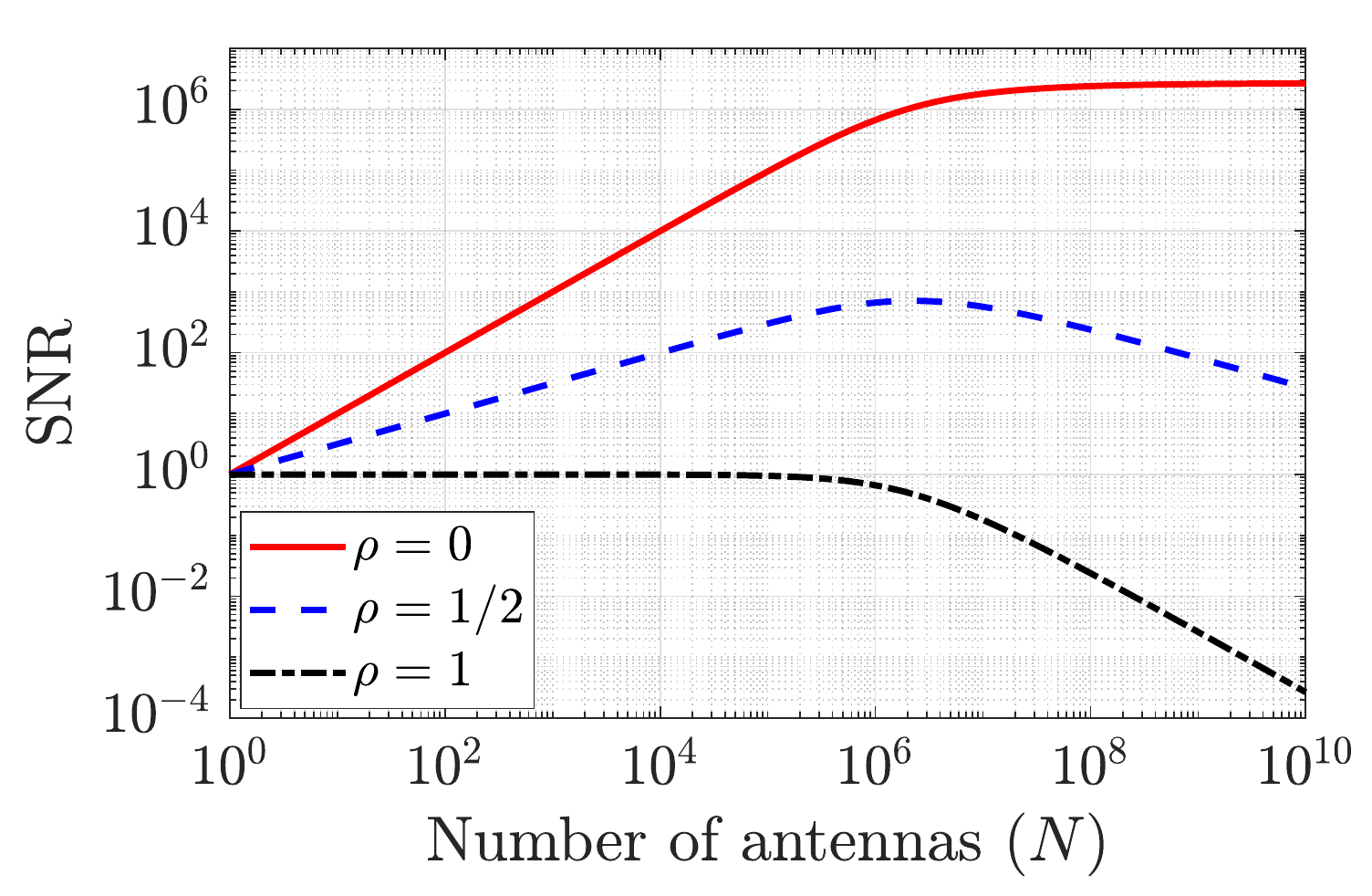}
\end{overpic} 
	\caption{The SNR value $\mathrm{SNR}_{\mathrm{MF}} $ in \eqref{eq:SNR-mMIMO_planar} when scaling down the transmit power as $\Ptx = P/N^{\rho}$ for $\rho \in \{0, 1/2, 1\}$.}
	\label{fig:simulation_scaling_law}  
\end{figure}

This corollary proves that any power scaling law will lead to zero SNR asymptotically. However, the scaling laws in previous literature will anyway accurately predict the SNR behavior in the far-field. We will demonstrate that by an example.

Fig.~\ref{fig:simulation_scaling_law} shows $\mathrm{SNR}_{\mathrm{MF}} $ in \eqref{eq:SNR-mMIMO_planar} when we scale down the transmit power as $\Ptx = P/N^{\rho}$ for $\rho \in \{0, 1/2, 1\}$, where $\rho=0$ corresponds to a constant power. The transmit power is selected so that $P \xi_{d,\varphi,N}/\sigma^2=0$\,dB for $N=1$. 
The simulation setup is the same as in Fig.~\ref{fig:near-field_modeling} with the wavelength $\lambda=0.1$\,m (i.e., $f=3$\,GHz) and each antenna has area $A = (\lambda/4)^2$.
We observe that for $\rho=0$, the far-field behavior of having an SNR that grows linearly with $N$ holds for any $N \le 10^5$, as also observed in Fig.~\ref{fig:near-field_modeling}.
If we select $\rho=1/2$, the SNR will instead grow as $\sqrt{N}$ for $N \le 10^5$. If $\rho=1$, the SNR is approximately constant for $N \le 10^5$.
For larger values of $N$, the SNR goes to zero whenever $\rho>0$, as proved analytically in Corollary~\ref{cor:limit_mMIMO}.

Since this example considers $\varphi=0$, we know from Corollary~\ref{cor:SNR_mMIMO} that $\xi_{d,0,N} =  \alpha_{d,N}$. It is the relation between $N$ and $d$ in $\alpha_{d,N}$ that determines when the far-field behavior breaks down. We will go deeper into this in the next section.

\section{Near-Field and Far-Field Distances for Antenna Arrays}
\label{sec:near-far-field}

The previous section uncovered several phenomena that are essential for modeling the total channel gain in the radiative near-field. We noticed that the distance at which we must use the near-field compliant formula is related to the diagonal $\sqrt{2NA}$ of the array, but not to the wavelength. This is somewhat surprising because the Fraunhofer distance, the classic border between the near-field and far-field, is wavelength-dependent. In this section, we will describe the different distances that are related to the radiative near-field and clarify their respective meanings and roles.

\subsection{Phase Variations and Fraunhofer Distance}

The transmit antenna has an aperture consisting of a continuum of point sources, each emitting spherical wave components. The combined wavefront might appear as planar when observed at a sufficiently large distance. The corresponding region is called the far-field and is characterized by the fact that the electric fields induced by the multiple point sources superimpose to create a field strength that is inversely proportional to the propagation distance $d$ with a proportionality constant only depending on the angle \cite{Friedlander}. To derive for which distances this applies, it is instructive to consider the opposite scenario \cite{Selvan}: an isotropic transmit antenna and a receiver aperture with some maximum length $D$. Due to reciprocity, in the far-field, the amplitude of the electric field should be approximately constant over the receive antenna and the phase variations only depend on the incident angle, not the distance.

\begin{figure}[t!]
	\centering
	\begin{overpic}[width=0.9\columnwidth,tics=10]{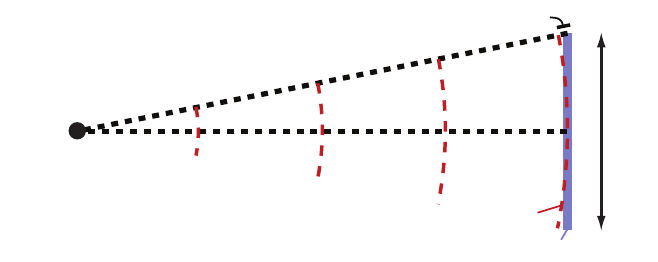}
	 \put (-4,19) {Transmitter}
	 \put (77,1) {Receiver}
	 \put (69,6.5) {Wavefront}
	 \put (51,29) {$d'$}
	 \put (51,21) {$d$}
	 \put (76,37) {$d'\!-\!d$}
	 \put (93,19) {$D$}
\end{overpic} 
	\caption{The curvature of an impinging spherical wave creates a delay $d'\!-\!d$ between the center of the receiver and the edge. The delay turns into a phase-shift of $\frac{2\pi}{\lambda} (d'\!-\!d)$.}
	\label{fig:fraunhofer} 
\end{figure}

Fig.~\ref{fig:fraunhofer} shows the worst-case scenario where the transmitted wave impinges perpendicularly to the receiver. If the propagation distance to the antenna's center is $d$, then the distance to the edges is $d' = \sqrt{d^2 + (D/2)^2}$.
As illustrated in the figure, the wave needs to travel an additional distance $d'-d$ to reach the edge, which will incur a phase shift relative to the center. Using the angular wavenumber $\frac{2\pi}{\lambda}$ (measured in radians per unit distance), we can compute this phase-shift as
\begin{equation}
\frac{2\pi}{\lambda} \left( d'-d\right) = \frac{2\pi}{\lambda} \left( \sqrt{d^2 + \frac{D^2}{4}} - d\right) \approx 
\frac{2\pi}{\lambda}  \frac{D^2}{8 d},
\end{equation}
where we used the first-order Taylor approximation $\sqrt{1+x} \approx 1+\frac{x}{2}$.

The electromagnetic literature is often treating $\pi/8$ as the maximum phase-shift that can be considered negligible since $\cos(\pi/8)=0.92$ \cite{Selvan}. If we let $d_{\mathrm{F}}$ denote the propagation distance that gives this exact phase-shift, we obtain
\begin{equation} \label{eq:Fraunhofer}
\frac{\pi}{8} = \frac{2\pi}{\lambda}  \frac{D^2}{8 d_{\mathrm{F}}} \quad \rightarrow \quad
d_{\mathrm{F}} = \frac{2D^2}{\lambda}.
\end{equation}
The distance $d_{\mathrm{F}}$ is called the \emph{Fraunhofer distance} \cite{Cheng,Sherman} or \emph{Rayleigh distance} \cite{Kraus2002}.

\begin{example}{Example 2}For an antenna with the maximum length $D= \lambda$, operating 
at the carrier frequency of $f=3$\,GHz (i.e., $\lambda = 0.1$\,m), the Fraunhofer distance is  $d_{\mathrm{F}} = 2\lambda = 0.2$\,m.
\end{example}

This example shows that the Fraunhofer distance is short, even for relatively large antennas. Hence, long-range communication systems operate beyond that distance.

\subsubsection{Fraunhofer Array Distance}
\label{subsec:fraunhofer_array}

An ELAA consists of many antennas, each small compared to the wavelength, but collectively making a large total aperture. Suppose the ELAA is a planar array with $N$ identical antennas, as in Fig.~\ref{fig:geometric_setup}, and each antenna has the diagonal $D$. The maximum length of the array is the diagonal $W=D\sqrt{N}$.
If we require the phase variation of the impinging wave to be less than $\pi/8$ over the array aperture, we need to be beyond the \emph{Fraunhofer array distance} that is obtained by replacing $D$ by $W$ in \eqref{eq:Fraunhofer}:
\begin{equation} \label{eq:Fraunhofer-array}
d_{\mathrm{FA}} = \frac{2 W^2}{\lambda} = \frac{2 \left(D \sqrt{N} \right)^2}{\lambda} = N d_{\mathrm{F}}. 
\end{equation} 
We notice that it is precisely $N$ times larger than the Fraunhofer distance of an individual antenna. The following example shows that $d_{\mathrm{FA}}$ can be very large.

\begin{example}{Example 3}Consider an ELAA that is used at the wavelength $\lambda=0.1$\,m (i.e., $f=3$\,GHz).
The Fraunhofer array distance is $d_{\mathrm{FA}}=20$\,m if the array's diagonal is $W=1$\,m, while it grows to $d_{\mathrm{FA}}= 2$\,km if $W=10$\,m.
If the wavelength shrinks to $\lambda=0.01$\,m (i.e., $f=30$\,GHz), these numbers increase to $200$\,m and $20$\,km, respectively.
\end{example}

This example shows that the Fraunhofer array distance can be very large for ELAAs, such that almost every user served by the array is closer than $d_{\mathrm{FA}}$ to the transmitter. More precisely, we expect propagation distances $d$ such that $d_{\mathrm{F}} \leq d \leq d_{\mathrm{FA}}$. This implies that each antenna observes a locally plane wave in the uplink but the spherical curvature is noticeable when comparing the phases between different antennas. This situation was illustrated in Fig.~\ref{fig:geometric_setup_matched_filtering}, where there are large phase variations over the array aperture, but small variations over the individual antenna.
The phase variations across the array can be compensated for using MF, as demonstrated in Section~\ref{subsec:system-model}, if the channel vector is known.

\subsection{Gain Variations and Bj\"ornson Distance}

One of the main purposes of antenna arrays is to use beamforming (e.g., MF) to achieve a larger total channel gain than with a single antenna. The channel gain grows linearly with the number of antennas in the far-field, while we noticed in Section~\ref{subsec:scaling_law} that the gain saturates for very large arrays. This is due to gain variations that occur over the array in the near-field. We will now take a closer look at these variations from a receiver perspective and assume an isotropic transmitter, but the same result holds in the reciprocal setup with a transmitting array \cite{Sherman,Polk,Kay,Hansen}.

Similarly to Lemma~\ref{lemma1}, we consider an isotropic transmitter located at $\vect{p}_t = (0,0,d)$ that emits a signal with
wavelength $\lambda$ and polarization in the $Y$-direction. If the electric intensity  is denoted as $E_0$ Volts, it follows from \eqref{eq:complex_channel} that the impinging electric
field perpendicular to a receive antenna at location $(x, y, 0)$ is
\begin{equation}
E(x,y)  
=\frac{E_0}{\sqrt{4 \pi}} \frac{\sqrt{d (x^2 + d^2)}}{(x^2+y^2+d^2)^{5/4}} e^{-\imagunit \frac{2\pi}{\lambda}\sqrt{x^2+y^2+d^2}}. \label{eq:intensity-function}
\end{equation} 
This near-field compliant field expression can be compared with the corresponding expression for a plane wave that has the same amplitude and phase at $x=y=0$:
\begin{equation} \label{eq:plane-wave}
E_{\mathrm{plane}}(x,y) = \frac{E_0}{\sqrt{4\pi} } \frac{1}{d} e^{-\imagunit \frac{2\pi}{\lambda}d}.
\end{equation}
We will use these expressions to define the gain and characterize the gain variations.

\subsubsection{Antenna Gain and Antenna Array Gain}

The \emph{antenna gain} determines how effective a particular antenna is compared to an isotropic antenna.
The receive antenna gain for an antenna with area $A$ that spans a subset $\mathcal{S} \subset \mathbb{R}^2$ of the $XY$-plane is defined as \cite[Eq.~(6)]{Kay}
\begin{equation}
G = \frac{ \frac{1}{\eta} \left| \frac{1}{A} \int_{\mathcal{S}}  E(x,y) dx \, dy \right|^2}{\frac{\lambda^2}{4\pi} \frac{1}{\eta} \int_{\mathcal{S}} \frac{1}{A^2} \left| E(x,y) \right|^2 dx \, dy  } = \frac{ \left| \int_{\mathcal{S}}  E(x,y) dx \, dy \right|^2}{\frac{\lambda^2}{4\pi} \int_{\mathcal{S}} \left| E(x,y) \right|^2 dx \, dy  }. \label{eq:G-exact}
\end{equation}
The numerator in the first expression is the received power ($\eta$ is the impedance of free space), obtained similarly to \eqref{eq:total_gain}, divided by the area $\lambda^2/(4\pi)$ of an isotropic antenna multiplied by the average power flow through the antenna aperture. The second expression removes redundant terms.
The largest antenna gain is achieved in the far-field with the perpendicular plane wave defined in \eqref{eq:plane-wave}, which results in 
\begin{equation}
G_{\mathrm{plane}} = \frac{ A^2 \frac{E_0^2}{4\pi d^2 } }{\frac{\lambda^2}{4\pi} A \frac{E_0^2}{4\pi d^2 }} = \frac{4 \pi A}{\lambda^2}. \label{eq:G-plane}
\end{equation}
To measure how close to the maximum gain we can reach in the near-field, we define the \emph{normalized antenna gain} as the ratio between \eqref{eq:G-exact} and \eqref{eq:G-plane}:
\begin{equation}
G_{\mathrm{antenna}} =  \frac{G}{G_{\mathrm{plane}}} =  \frac{ \left| \int_{\mathcal{S}}  E(x,y) dx \, dy \right|^2}{A \int_{\mathcal{S}} \left| E(x,y) \right|^2 dx \, dy  }. \label{eq:G-exact-normalized}
\end{equation}

Next, we will define the \emph{antenna array gain} for a planar ELAA with $N$ antennas. Recall that we defined the antenna coordinates $\vect{p}_n=(x_n,y_n,0)$ for $n=1,\ldots,N$ in 
\eqref{eq:xn} and \eqref{eq:yn}.
This implies that antenna $n$ covers the area
\begin{align} 
\mathcal{S}_n = \left\{ (x,y) : \left| x - x_n \right| \leq \frac{a}{2},  \left| y - y_n  \right| \leq \frac{a}{2} \right\} \subset \mathbb{R}^2.
\end{align}
Similar to \eqref{eq:G-exact-normalized}, the normalized antenna array gain can be defined as 
\begin{equation} \label{eq:antenna-array-gain-exact}
G_{\mathrm{array}} = \frac{  \sum_{n=1}^N \left|  \int_{\mathcal{S}_n}  E(x,y) dx \, dy \right|^2 }{ N A  \int_{\mathcal{S}} \left| E(x,y) \right|^2 dx \, dy},
\end{equation}
which is the total received power of the $N$ antennas divided by the received power of $N$ references antennas. The reference antenna captures all the power of the electric field and is located in the origin with $\mathcal{S} = \{ (x,y) : | x  | \leq \frac{a}{2},  | y  | \leq \frac{a}{2} \}$.
This definition from \cite{Bjornson4} combines the antenna gains and array gain into a single metric. These gains are normally treated as separate multiplicative factors since they decouple in the far-field but we need to treat them jointly in the near-field.
The normalized antenna array gain $G_{\mathrm{array}} $ takes values between $0$ and $1$. The maximum value is preferred since the array will then capture the same power as in the ideal case of having an incident wave with perpendicular planar wavefronts.
We can either compute \eqref{eq:antenna-array-gain-exact} numerically or
use the upper bound from Corollary~\ref{cor:alpha_expression} as follows.

\begin{corollary} \label{cor:normalized-array-gain}
The normalized array gain in \eqref{eq:antenna-array-gain-exact} can be upper bounded as
\begin{align} 
& G_{\mathrm{array}}  \leq  \frac{\alpha_{d,N} }{ N \alpha_{d,1}},
\label{eq:G-upper-bound}
\end{align}
where $\alpha_{d,N} $ is defined in \eqref{eq:alpha-expression}.
\end{corollary}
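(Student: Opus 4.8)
The plan is to rewrite the ratio in \eqref{eq:antenna-array-gain-exact} entirely in terms of the channel gains of Lemma~\ref{lemma1} and the bound $\alpha_{d,N}$ of Corollary~\ref{cor:alpha_expression}, after which a common constant cancels. The starting point is that, because $\vect{p}_t=(0,0,d)$ here, comparing \eqref{eq:complex_channel} with \eqref{eq:intensity-function} gives $E(x,y)=E_0\,\epsilon(\vect{p}_t,(x,y,0))$: the impinging field is exactly the kernel of Lemma~\ref{lemma1} scaled by the constant $E_0$. I will use this single fact to transport both the numerator and the denominator of \eqref{eq:antenna-array-gain-exact} into the normalized channel-gain world, where $E_0$ disappears.

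For the numerator, each region $\mathcal{S}_n$ is precisely the integration domain of antenna $n$ in \eqref{eq:total_gain} with side $a=\sqrt{A}$, so $\int_{\mathcal{S}_n}E\,dx\,dy = E_0\sqrt{A}\,h_n(\vect{p}_t)$ and therefore $\sum_{n=1}^{N}\left|\int_{\mathcal{S}_n}E\,dx\,dy\right|^2 = E_0^2 A\sum_{n=1}^{N}|h_n(\vect{p}_t)|^2$. Corollary~\ref{cor:alpha_expression} states precisely that this matched-filter channel gain is bounded by $\alpha_{d,N}$, i.e. $\sum_{n=1}^{N}|h_n(\vect{p}_t)|^2\le\alpha_{d,N}$, so the numerator is at most $E_0^2 A\,\alpha_{d,N}$.

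For the denominator the key is to identify what $\alpha_{d,1}$ really is. The bound $\zeta$ of Theorem~\ref{lemma2} must be the one obtained by replacing $\left|\int_{\mathcal{S}_n}\epsilon\right|^2$ with $A\int_{\mathcal{S}_n}|\epsilon|^2$ (a Cauchy--Schwarz/constant-field step), i.e. $\zeta_{\vect{p}_t,\vect{p}_n,a}=\int_{\mathcal{S}_n}|\epsilon|^2\,dx\,dy$ is the power flux through the aperture; this additivity over sub-apertures is exactly why summing $N$ small antennas reproduces the single large-antenna value in Corollary~\ref{cor:alpha_expression}. In particular $\alpha_{d,1}=\int_{\mathcal{S}}|\epsilon|^2\,dx\,dy$ over the reference antenna $\mathcal{S}$ at the origin, so the denominator is $NA\int_{\mathcal{S}}|E|^2\,dx\,dy = NA\,E_0^2\,\alpha_{d,1}$ \emph{exactly}. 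Dividing the two estimates cancels $E_0^2 A$ and yields $G_{\mathrm{array}}=\big(\sum_n|h_n|^2\big)/(N\alpha_{d,1})\le\alpha_{d,N}/(N\alpha_{d,1})$, which is \eqref{eq:G-upper-bound}.

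I expect the only genuine obstacle to be this last identification: one must see that $\alpha_{d,1}$ is the \emph{exact} flux integral $\int_{\mathcal{S}}|\epsilon|^2$ and hence matches the denominator's $\int_{\mathcal{S}}|E|^2$ with no loss, while the numerator inherits the inequality from Corollary~\ref{cor:alpha_expression}. Reading $\zeta$ instead as $a^{-2}(\int|\epsilon|)^2$ (phase removal only) would break both the additivity used above and the exact cancellation in the denominator, so pinning down that $\zeta$ integrates $|\epsilon|^2$ is the crucial step, and everything else is bookkeeping of the constant $E_0^2A$.
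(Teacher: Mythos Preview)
Your proposal is correct and is essentially the paper's own argument, just routed through earlier results: where the paper applies Cauchy--Schwarz directly to each $\left|\int_{\mathcal{S}_n}E\right|^2$ and then identifies the resulting integrals of $|E|^2$ with $\alpha_{d,N}$ and $\alpha_{d,1}$, you instead rewrite the numerator as $E_0^2A\sum_n|h_n|^2$ and invoke the bound $\sum_n|h_n|^2\le\alpha_{d,N}$ already established (via the same Cauchy--Schwarz step) in Theorem~\ref{lemma2}/Corollary~\ref{cor:alpha_expression}. Your key identification $\alpha_{d,1}=\int_{\mathcal{S}}|\epsilon|^2$ for the denominator is exactly what the paper uses, so the two arguments coincide.
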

\begin{proof}
The proof is given in Appendix~\ref{app:proof_corr15}.
\end{proof}

The upper bound in \eqref{eq:G-upper-bound} is tight for electrically small antennas (e.g., $a \leq \lambda/4$) for the same reason as in Theorem~\ref{lemma2}; that is, $E(x,y)$ is constant over each antenna.

\subsubsection{Bj\"ornson Distance}

The normalized antenna array gain in \eqref{eq:antenna-array-gain-exact} is close to $1$ when the propagation distance is sufficiently large. This occurs for distances beyond the \emph{Bj\"ornson distance} \cite{Bjornson2,Bjornson4}
\begin{equation} \label{eq:Bjornson}
d_{\mathrm{B}} = 2 W = 2D \sqrt{N},
\end{equation}
where $W=\sqrt{2NA}$ is the diagonal of the ELAA and $D=\sqrt{2A}$ is the diagonal of an antenna.
We noticed the same thing in Fig.~\ref{fig:near-field_modeling}, where the near-field and far-field curves begin to deviate precisely when the propagation distance is twice the largest dimension $W$ of the antenna.
The expression in \eqref{eq:Bjornson} differs substantially from the Fraunhofer array distance in \eqref{eq:Fraunhofer-array}: it grows with the number of antennas as $\sqrt{N}$ instead of $N$, and is wavelength-independent since the free-space decay in amplitude is the same for all frequencies. Consequently, we typically have $d_{\mathrm{FA}} \geq d_{\mathrm{B}}$, a condition that can be rewritten as
\begin{equation}
\frac{2 \left(D \sqrt{N} \right)^2}{\lambda}  \geq 2 D \sqrt{N}  \,\,\, \Rightarrow \,\,\, N \geq \frac{\lambda^2}{D^2}.
\end{equation}
This inequality holds for $N\geq 1$ if $D=\lambda$ and for $N \geq 16$ if $D=\lambda/4$. Moreover, the Bj\"ornson distance is larger than $d_{\mathrm{F}}$ whenever $N \geq (D/\lambda)^2$.

The selection of $d_{\mathrm{B}}$ can be motivated geometrically by limiting the loss in received power due to the spherical wavefront.
We will exemplify it by considering a worst-case circular array with diameter $W$: $S_{\mathrm{circ}} = \{ (x,y) : x^2+y^2 \leq (W/2)^2 \}$.
Similar to Fig.~\ref{fig:fraunhofer}, we consider a transmitter at distance $d$ and compute the ratio of the received power of the antenna with a spherical wavefront and with a planar wavefront:
\begin{equation}
    \frac{\int_{S_{\mathrm{circ}}} \frac{1}{d^2+x^2+y^2} dx dy }{ \int_{S_{\mathrm{circ}}} \frac{1}{d^2} dx d\theta}
    = \frac{\pi \log \left( 1 +  \left( \frac{W}{2} \right)^2 \frac{1}{d^2}\right)}{\pi \left( \frac{W}{2} \right)^2 \frac{1}{d^2}} \approx 1 - \frac{W^2}{8d^2},
\end{equation}
where we used the second-order Taylor approximation $\log(1+x) \approx x-\frac{x^2}{2}$. The second term is the relative power loss and it becomes
$1/32\approx 0.03$ for $d=d_{\mathrm{B}}=2W$.
At this distance, the average power variations are negligible over the array. Note that this metric differs from classical ones in the electromagnetic literature that focus on keeping the maximum amplitude variation below $\cos(\pi/8)$ \cite{Sherman}, which has less operational meaning in communications.

\begin{figure}[t!]
        \centering

	\begin{overpic}[width=\columnwidth,tics=10]{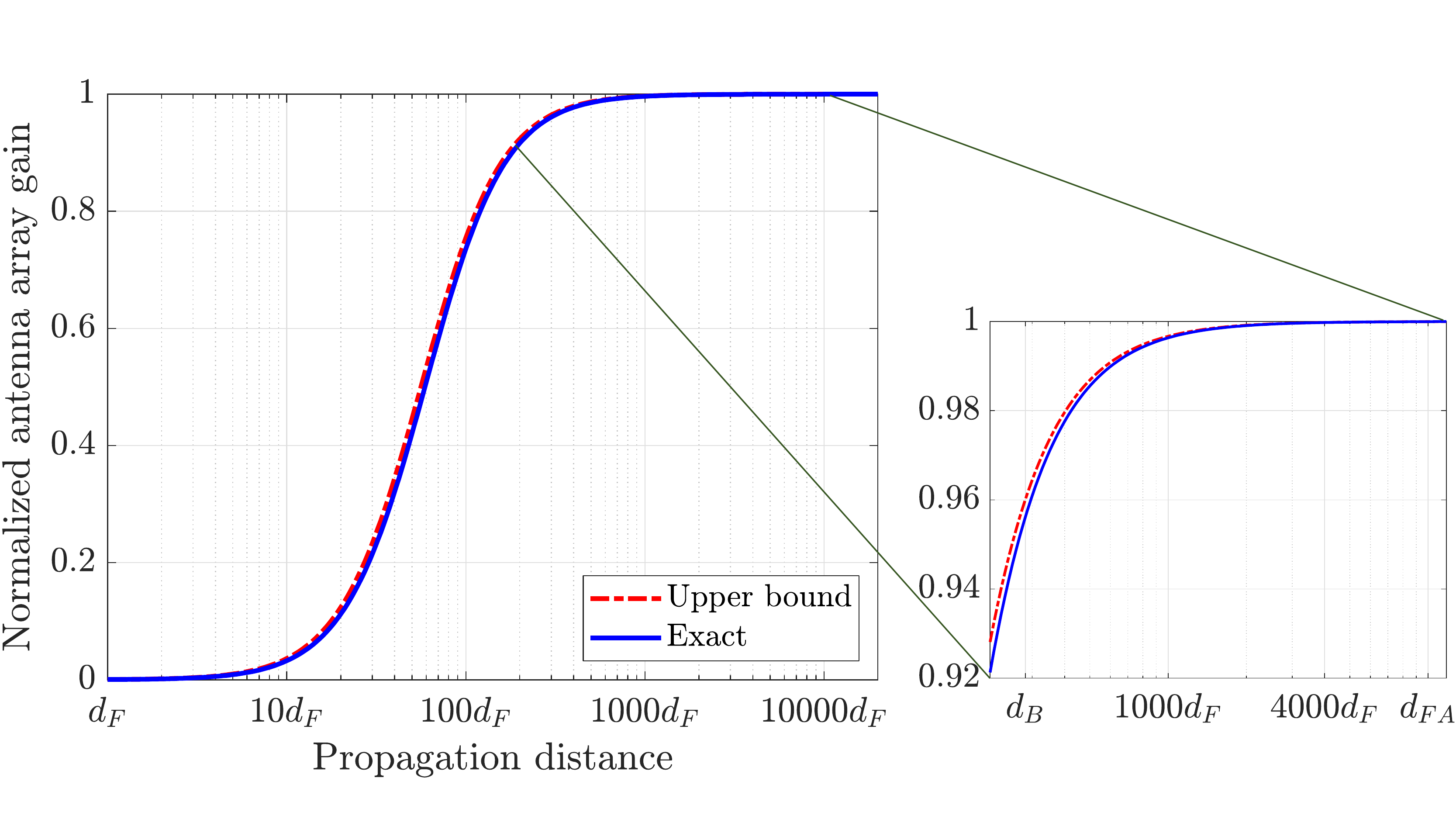}
\end{overpic}  \vspace{-4mm}

        \caption{The normalized antenna array gain $G_{\mathrm{array}} $ is about $0.96$ for $z \geq d_{\mathrm{B}}$.}
        \label{fig:normalized_gain}
\end{figure}

Fig.~\ref{fig:normalized_gain} shows the normalized antenna array gain $G_{\mathrm{array}} $ as a function of the distance $d$ from the transmitter to the center of the planar ELAA. There are $N=10^4$ antennas, each with the area $A=(\lambda/4)^2$.
The Fraunhofer distance becomes $d_{\mathrm{F}} = \lambda/4$ and is used as the reference unit on the horizontal axis, to make the results wavelength-independent. We compare the exact $G_{\mathrm{array}}$ in \eqref{eq:antenna-array-gain-exact} with the upper bound in \eqref{eq:G-upper-bound}. 
We notice that around 96\% of the maximum gain is achieved for $d \geq d_{\mathrm{B}} \approx 283 d_{\mathrm{F}}$ (as expected from the Bj\"ornson distance), while the normalized gain is very close to $1$ at the Fraunhofer array distance $d_{\mathrm{FA}}=10^4 d_{\mathrm{F}}$. There is a large interval between these distances because $d_{\mathrm{FA}}/d_{\mathrm{B}} \approx 35$.
We further notice that the bound in \eqref{eq:G-upper-bound} is tight since each antenna is electrically small.

The example confirms that we (almost) achieve the maximum antenna array gain whenever the receiver is at a distance $d \geq d_{\mathrm{B}}$ (larger than the Bj\"ornson distance), even if $d \leq d_{\mathrm{FA}}$.
The reason is that each antenna is in the far-field of the transmitter \cite{Friedlander} and, thus, observes a locally plane wave from which it can extract the maximum gain.
In the practical scenario of $d_{\mathrm{B}} \leq d \leq d_{\mathrm{FA}}$, the spherical curvature is noticeable when comparing the local phases between the $N$ antennas, as illustrated in Fig.~\ref{fig:geometric_setup_matched_filtering}, and MF compensates for it.
This would not be possible with a single antenna having the same total area $NA$, thus demonstrating a key benefit of antenna arrays.

\subsection{Finite-Depth Near-Field Beamforming} \label{subsec:nearfield_beamforming}

Although the Fraunhofer array distance is not determining the achievable antenna array gain $G_{\mathrm{array}} $, it manifests what kind of signal processing is needed to achieve it.
If the propagation distance is $d \geq d_{\mathrm{FA}}$, we can use the classic
plane wave approximation and the MF will use the far-field array response vector that only depends on the angle of arrival/departure.
In contrast, for $d \leq d_{\mathrm{FA}}$, we need to consider the spherical curvature and the MF will depend both on the angle and distance.
Moreover, in this section, we will show that the transmitted beam might also have a limited depth-of-focus (DF). 

When the MF is designed to focus the transmitted signal at a point $(0,0,z)$, the maximum antenna array gain is obtained at that point and smaller numbers at other points $(0,0,d)$ in the same angular direction.
We want to quantify the distance dependence of the gain.
To this end, we begin by computing an approximate analytical expression for $G_{\mathrm{array}} $ in \eqref{eq:antenna-array-gain-exact} using the classic Fresnel approximation $E(x,y) \approx \frac{E_0}{\sqrt{4\pi} z} e^{-\imagunit \frac{2\pi}{\lambda} (z + \frac{x^2}{2z} + \frac{y^2}{2z})}$ of the electric field \cite[Eq.~(22)]{Polk}:
\begin{align} \notag
G_{\mathrm{array}} &\approx \left( \frac{1}{A N} \right)^2 \sum_{n=1}^N \left| e^{-\imagunit \frac{2\pi}{\lambda} z  }\int_{\mathcal{S}_n}  e^{-\imagunit \frac{2\pi}{\lambda} (\frac{x^2}{2z} + \frac{y^2}{2z})} dx \, dy  \right|^2 \\ \notag
&= \left( \frac{1}{A N} \right)^2 \left| \int_{-\sqrt{NA}/2}^{\sqrt{NA}/2}  e^{-\imagunit \frac{\pi}{\lambda} \frac{x^2}{z}} dx \right|^4 \\
&= \left( \frac{8 z}{d_{\mathrm{FA}}} \right)^2 \left( C^2 \left( \sqrt{\frac{d_{\mathrm{FA}}}{8 z} }\right) \!+\! S^2 \left( \sqrt{\frac{d_{\mathrm{FA}}}{8 z} }\right) \right)^2 \label{eq:G-approx},
\end{align}
where $C(x) = \int_{0}^{x} \cos(\pi t^2/2) dt$ and $S(x) = \int_{0}^{x} \sin(\pi t^2/2) dt$ are the Fresnel integrals, which can be evaluated using the error function. 

The antenna array gain observed at another point $(0,0,d)$ can be evaluated as 
\begin{align} \notag
G_{\mathrm{array},d} &\approx \left( \frac{1}{A N} \right)^2 \sum_{n=1}^N \left| e^{-\imagunit \frac{2\pi}{\lambda} d  }\int_{\mathcal{S}_n}  e^{-\imagunit \frac{2\pi}{\lambda} (\frac{x^2}{2d} + \frac{y^2}{2d})} e^{+\imagunit \frac{2\pi}{\lambda} (\frac{x^2}{2z} + \frac{y^2}{2z})} dx \, dy  \right|^2 \\ 
&= \left( \frac{8 z_{d}^{\textrm{eff}}}{d_{\mathrm{FA}} } \right)^2 \left( C^2 \left( \sqrt{\frac{d_{\mathrm{FA}} }{8 z_{d}^{\textrm{eff}}} }\right) + S^2 \left( \sqrt{\frac{d_{\mathrm{FA}}}{8 z_{d}^{\textrm{eff}}} }\right) \right)^2 \label{eq:G-approx-DF},
\end{align}
where $z_{d}^{\textrm{eff}} = \frac{dz}{|d-z|}$ represents the focal point deviation. 
Note that \eqref{eq:G-approx-DF} is computed similarly to \eqref{eq:G-approx} except that $z$ is replaced with $d$, and the phase-shift $e^{+\imagunit \frac{2\pi}{\lambda} (\frac{x^2}{2z} + \frac{y^2}{2z})}$ is injected into the integral to represent the effect of MF with small antennas.

We can define the DF as the distance interval $d \in [z_{\min},z_{\max}]$ where the antenna array gain is at most $3$\,dB lower than the maximum value \cite{Sherman,Nepa} (achieved at the focal point $d=z$).
We notice that  $G_{\mathrm{array},d}$ has the structure  $A(x) = (C^2(\sqrt{x}) + S^2(\sqrt{x}) )^2 /x^2$, where $x = d_{\mathrm{FA}}  / (8 z_{d}^{\textrm{eff}})$. Moreover, $A(x) $ is a decreasing function for $x \in [0,2]$ with $A(0) =1$ and $A(1.25) \approx 0.5$.  Hence, the $3$\,dB gain loss is obtained when 
\begin{equation}
1.25 = \frac{Nd_{\mathrm{F}}}{8 z_{d}^{\textrm{eff}}} = \frac{Nd_{\mathrm{F}} |d-z|}{8 dz}  \,\, \rightarrow \,\, d = \frac{d_{\mathrm{FA}} z  }{d_{\mathrm{FA}}   \pm 10 z}.
\end{equation}

\begin{figure}[t!]
\begin{center}
	\begin{overpic}[trim={5mm 5mm 12mm 5mm},width=0.85\columnwidth,tics=10]{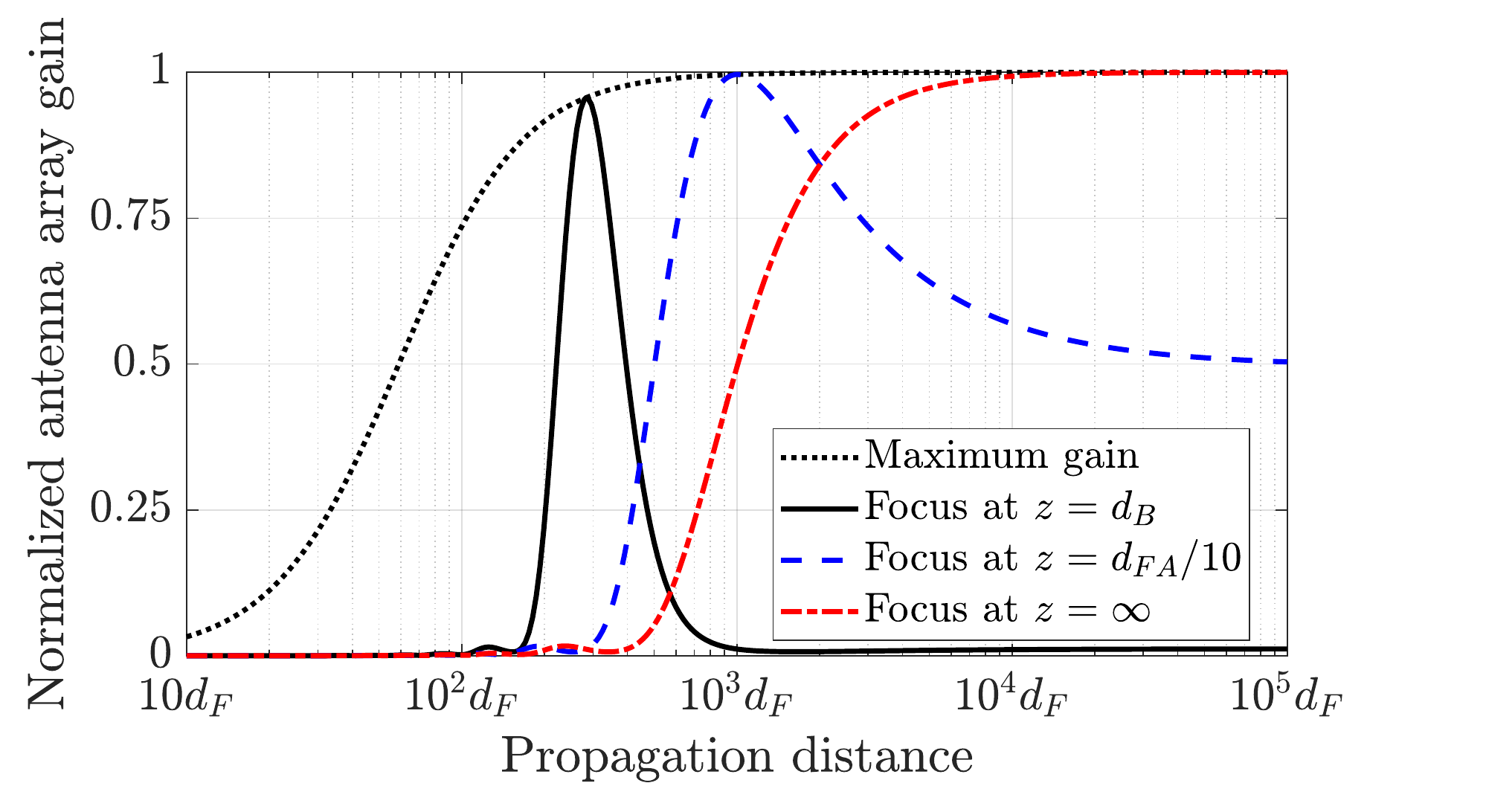}
	 \put(71.6,46.5){\vector(0,1){4}}
	 \put(69,44){\footnotesize $d_{\mathrm{FA}}$}
	 \put(51.55,46.5){\vector(0,1){4}}
	 \put(49,42.5){\footnotesize $\frac{d_{\mathrm{FA}}}{10}$}
	 \put(38.6,53.1){\footnotesize $d_{\mathrm{B}}$}
	 \put(40.4,51.7){\vector(0,-1){2}}
\end{overpic} 
\end{center} 
\caption{The depth of a beam depends on the location the MF focuses on. The depth is finite for near-field beamforming, when the focal point is closer than $d_{\mathrm{FA}}/10$.} \label{fig:focusing}   \vspace{-2mm}
\end{figure}

\begin{theorem} \label{th:DF}
When MF is utilized to focus on a receiver at $(0,0,z)$, the 3\,dB depth-of-focus in the same direction is
\begin{equation}
d \in \left[ \frac{d_{\mathrm{FA}} z }{d_{\mathrm{FA}}  + 10 z} ,  \frac{d_{\mathrm{FA}} z  }{d_{\mathrm{FA}}  - 10 z} \right] \label{eq:BD-interval}
\end{equation}
if $z < d_{\mathrm{FA}} /10$. Otherwise, the upper limit in \eqref{eq:BD-interval} is replaced by $\infty$.
\end{theorem}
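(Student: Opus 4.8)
The plan is to read the depth-of-focus directly off the closed-form gain profile in \eqref{eq:G-approx-DF}. The key structural observation is that $G_{\mathrm{array},d}$ depends on the observation distance $d$ only through the single scalar $x = d_{\mathrm{FA}}/(8 z_{d}^{\mathrm{eff}})$, via the profile $A(x) = (C^2(\sqrt{x}) + S^2(\sqrt{x}))^2 / x^2$ identified just before the theorem. At the focal point $d = z$ we have $z_{d}^{\mathrm{eff}} = dz/|d-z| \to \infty$, hence $x \to 0$ and $A(0) = 1$, which is the normalized peak. The $3$\,dB depth-of-focus is therefore exactly the set of distances $d$ for which $A(x(d)) \geq 1/2$, and the whole argument amounts to converting this gain condition into an interval of $d$.

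First I would translate the gain condition into an inequality on $x$. Since $A$ is decreasing on $[0,2]$ with $A(1.25) \approx 1/2$, the condition $A(x) \geq 1/2$ is equivalent to $x \leq 1.25$, which stays safely inside the monotone branch because $1.25 < 2$. Substituting $x = d_{\mathrm{FA}} |d-z| / (8 dz)$ turns this into $d_{\mathrm{FA}} |d-z| \leq 10 dz$, the two-sided characterization of the $3$\,dB region.

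Next I would resolve the absolute value by treating the two sides of the focal point separately. On the near side $d \leq z$, setting $d_{\mathrm{FA}}(z-d) = 10 dz$ and solving for $d$ gives the lower boundary $d_{\mathrm{FA}} z /(d_{\mathrm{FA}} + 10 z)$; this root is always finite and positive, so the near edge of the depth-of-focus is always well defined. On the far side $d \geq z$, the boundary equation $d_{\mathrm{FA}}(d-z) = 10 dz$ rearranges to $d(d_{\mathrm{FA}} - 10 z) = d_{\mathrm{FA}} z$, whose positive solution $d_{\mathrm{FA}} z /(d_{\mathrm{FA}} - 10 z)$ exists only when $d_{\mathrm{FA}} - 10 z > 0$, i.e.\ $z < d_{\mathrm{FA}}/10$. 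Together these two boundaries yield the interval in \eqref{eq:BD-interval}.

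Finally I would dispose of the regime $z \geq d_{\mathrm{FA}}/10$, where the far-side root disappears. Here the factor $d_{\mathrm{FA}} - 10 z$ is nonpositive, so $d(d_{\mathrm{FA}} - 10 z) \leq 0 \leq d_{\mathrm{FA}} z$ holds for every $d > z$; equivalently, as $d \to \infty$ we have $z_{d}^{\mathrm{eff}} \to z$ and $x \to d_{\mathrm{FA}}/(8 z) \leq 1.25$, so the gain never falls back below the $3$\,dB threshold and the upper limit becomes $\infty$, as claimed. The one point that needs genuine care is confirming that the $3$\,dB region is a single connected interval rather than a union of pieces; this is what legitimizes the equivalence used in the first step. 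It follows from the monotonicity of $z_{d}^{\mathrm{eff}} = dz/|d-z|$ on each of the intervals $d < z$ and $d > z$ (it diverges as $d \to z$ and is strictly monotone away from the focal point), which makes $x(d)$ monotone on each side; hence $x(d)$ crosses the level $1.25$ at most once per side and we remain on the decreasing branch $x \in [0,2]$ of $A$ throughout.
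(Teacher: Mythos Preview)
Your proposal is correct and follows essentially the same route as the paper: the paper's derivation (given in the paragraph immediately preceding the theorem) also reduces $G_{\mathrm{array},d}$ to the profile $A(x)$ with $x = d_{\mathrm{FA}}/(8 z_d^{\mathrm{eff}})$, invokes the monotonicity of $A$ on $[0,2]$ with $A(1.25)\approx 0.5$, and solves $1.25 = d_{\mathrm{FA}}|d-z|/(8dz)$ for $d = d_{\mathrm{FA}} z/(d_{\mathrm{FA}} \pm 10z)$. Your treatment is in fact a bit more careful than the paper's, since you explicitly justify connectedness of the $3$\,dB region via the monotonicity of $z_d^{\mathrm{eff}}$ on each side of the focal point and spell out why the far-side root disappears when $z \geq d_{\mathrm{FA}}/10$.
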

Fig.~\ref{fig:focusing} shows the maximum normalized gain achieved at different distances from an array with $N=100^2 = 10^4$ antennas, each with length $A = (\lambda/4)^2$.
The figure also shows how signals arriving from different distances are amplified when the matched filtering is selected to focus on a transmitter located at three different distances.
The Fraunhofer distance is $d_{\mathrm{F}} = \lambda/4$, while $d_{\mathrm{B}} \approx 283 d_{\mathrm{F}}$ and $d_{\mathrm{FA}} = 10^4 d_{\mathrm{F}}$.
For far-field focusing at $z=\infty$, the normalized gain is between $1$ and $0.5$ (i.e., $-3$\,dB) in the interval $[d_{\mathrm{FA}}/10,\infty)$. This is the DF and is in line with Theorem~\ref{th:DF}.
If the matched filtering focuses on $z = d_{\mathrm{FA}}/10$, the DF is $[ d_{\mathrm{FA}}/20,\infty)=[ 500 d_{\mathrm{F}} ,\infty)$ but we approach $-3$\,dB as $d \to \infty$.
When the focal point is $z = d_{\mathrm{B}}$, the DF interval is roughly $[220 d_{\mathrm{F}}, 394 d_{\mathrm{F}}]$, which is rather narrow. As expected, there is also a noticeable loss in maximum gain when the focal point is the Bj\"ornson distance.

The result in Theorem~\ref{th:DF} and the observations of Fig.~\ref{fig:focusing} uncover another key property of the radiative near-field:
when focusing on a receiver closer than $d_{\mathrm{FA}} /10$, the beam will have limited DF. 
For more distant focal points, the $3$\,dB beam depth (BD) extends to infinity, as expected from conventional far-field beamforming. Moreover, as the focal point $z \to \infty$, the lower limit in \eqref{eq:BD-interval} approaches $d_{\mathrm{FA}} /10$, thus making it a natural border between near-field and far-field beamforming. We will call the length of the interval in \eqref{eq:BD-interval} the \emph{3\,dB BD} and it can be computed as
\begin{align}
    \textrm{BD}_{\textrm{3\,dB}} = \begin{cases} \frac{d_{\mathrm{FA}} z  }{d_{\mathrm{FA}}  - 10 z}-\frac{d_{\mathrm{FA}} z  }{d_{\mathrm{FA}}  + 10 z} = \frac{20 d_{\mathrm{FA}} z^2}{d_{\mathrm{FA}}^2 - 100z^2}, & z < \frac{d_{\mathrm{FA}}}{10}, \\
    \infty,  & z \geq \frac{d_{\mathrm{FA}}}{10}.
    \end{cases} \label{eq:BD}
\end{align}
The distinction between the two cases in \eqref{eq:BD} is illustrated in Fig.~\ref{fig:beamdepth}, where the gain is large and the color is strong. A conventional far-field beam (Fig.~\ref{fig:beamdepth1}) begins at roughly the distance $d_{\mathrm{FA}}/10$ and then continues towards infinity, while a near-field beam (Fig.~\ref{fig:beamdepth2}) has a finite depth around the focal point. Hence, the first illustration of beamforming that we provided in Fig.~\ref{fig:mu-MIMO-basic} is oversimplified because beams can never span the entire distance from the transmitter to infinity. With an ELAA, we can make use of both features depending on the propagation distance, similar to how the same camera can take close-ups with a blurry background (finite DF) and landscape photos that are sharp from a certain distance to infinity.
The dotted lines in Fig.~\ref{fig:beamdepth} illustrate how the angular beamwidth is the same irrespective of the focal distance \cite{Bjornson4}, while the beamwidth in meters depends on the focal distance.

\begin{figure}[t!]
        \centering
        \begin{subfigure}[b]{\columnwidth} \centering 
	\begin{overpic}[width=\columnwidth,tics=10]{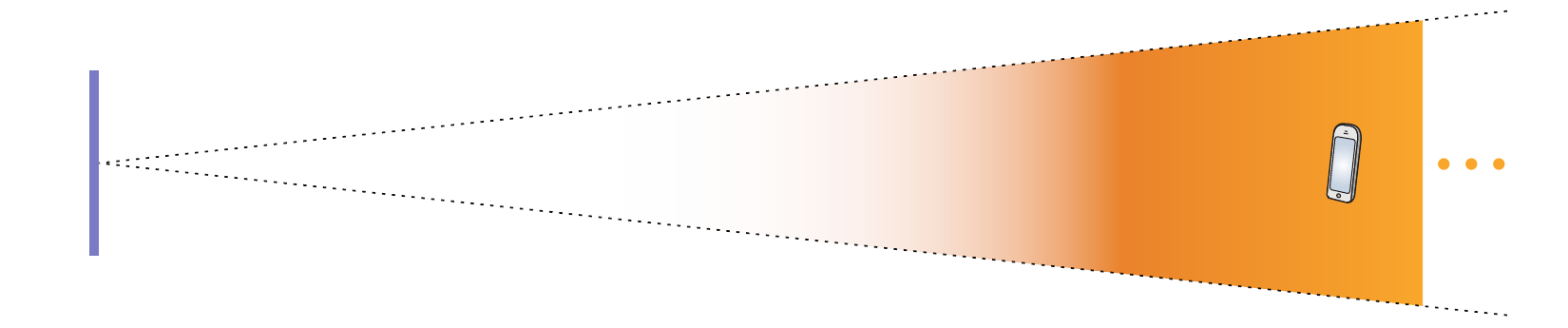}
 	\put(0.5,2){Transmitter}
 	\put(79,3){Focal point}
	 \put(85.5,4.5){\vector(0,1){3}}
\end{overpic}  \vspace{-2mm}
                \caption{Conventional far-field beamforming with an infinite depth.} 
            \label{fig:beamdepth1}
        \end{subfigure}\\
        \begin{subfigure}[b]{\columnwidth} \centering  \vspace{+2mm}
	\begin{overpic}[width=\columnwidth,tics=10]{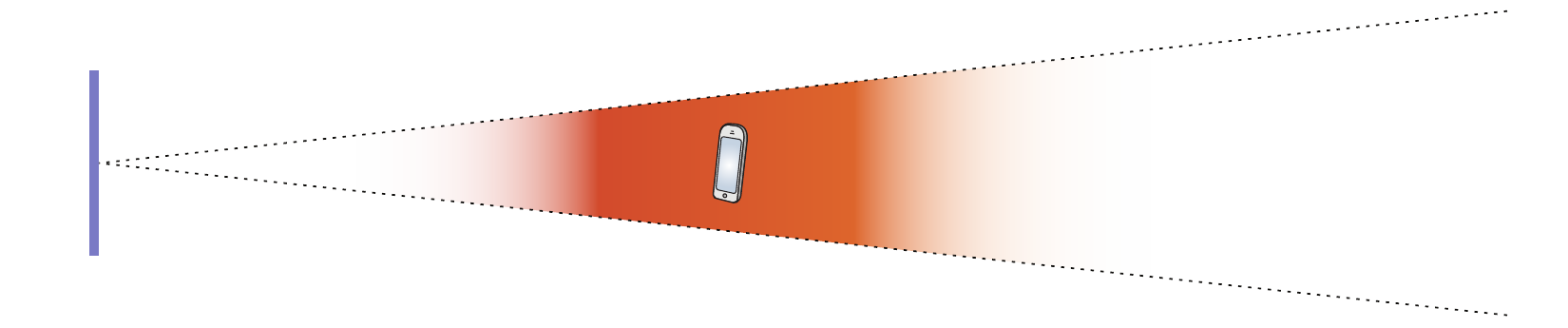}
 	\put(0.5,2){Transmitter}
 	\put(46.5,4.5){\vector(0,1){3}}
 	\put(40,2.5){Focal point}
\end{overpic}  \vspace{-2mm}
                \caption{Near-field beamforming with a finite depth.} 
                     \label{fig:beamdepth2}
        \end{subfigure} 
        \caption{The depth of a beam depends on whether it is focused on a receiver located in the far-field ($z \geq d_{\mathrm{FA}}/10$) or radiative near-field ($z < d_{\mathrm{FA}}/10$).}
        \label{fig:beamdepth}
\end{figure}

\section{Near-Field Multiplexing in the Depth Domain}

The near-field beamforming described in Section \ref{subsec:nearfield_beamforming} enables the use of depth as a new dimension for communication.
In this section, we will demonstrate how we can utilize the depth domain to increase the capacity of a system by serving multiple users that are located in the same angular direction with respect to the ELAA, but at sufficiently different distances.

Spatial multiplexing in the far-field relies on serving users located at sufficiently different angles, in the sense that the beamwidths of their main lobes from the access point are non-overlapping.
We will apply the same principle for multiplexing users in the depth domain.
From Theorem~\ref{th:DF}, we know that the DF when using MF to focus a beam in the far-field (i.e.,  $d_1=\infty$) begins at the distance $d_{\mathrm{FA}}/10$. We can now compute a second focal point $d_2$ that has its upper limit of the 3\,dB BD interval exactly where the far-field beam begins:
\begin{equation}
  \frac{d_{\mathrm{FA}}d_2}{d_{\mathrm{FA}}-10d_2} = \frac{d_{\mathrm{FA}}}{10} \, \, \Rightarrow \, \, d_2 = \frac{d_{\mathrm{FA}}}{20}.  
\end{equation}
The DF interval for $d_2 = d_{\mathrm{FA}}/20$ is $[d_{\mathrm{FA}}/30,~ d_{\mathrm{FA}}/10]$. Similarly, we can obtain the focal point $d_3 = d_{\mathrm{FA}}/40$, which has the DF interval  $[d_{\mathrm{FA}}/50,~d_{\mathrm{FA}}/30]$
for which the upper limit matches with the lower limit for $d_2$.
The next two focal points will be $d_4 = d_{\mathrm{FA}}/60$ and $d_4 = d_{\mathrm{FA}}/80$.

\begin{figure}[t!]
	\centering 
	\begin{overpic}[trim={5mm 5mm 12mm 5mm},width=0.85\columnwidth,tics=10]{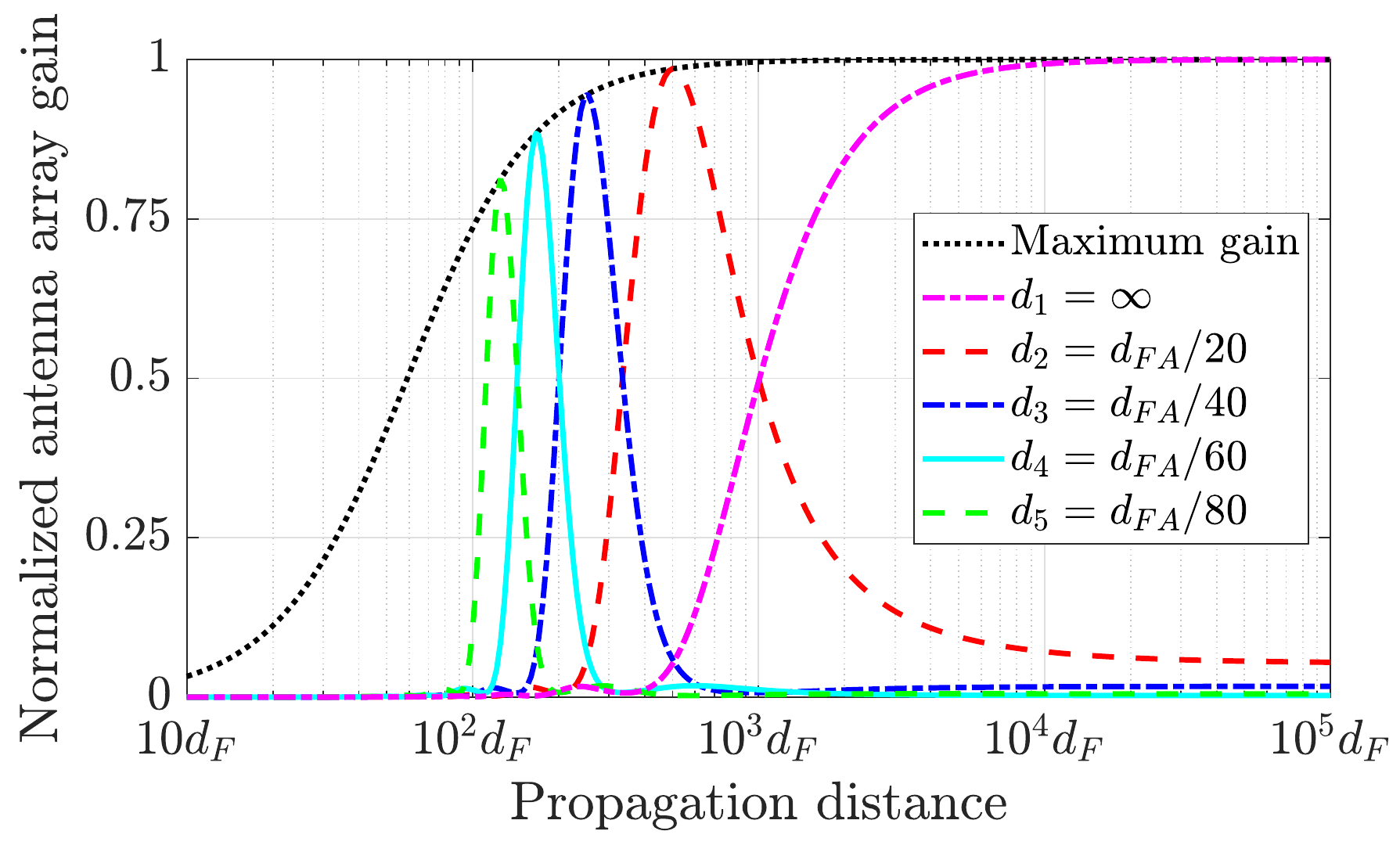}
\end{overpic} 
	\caption{The finite depth of near-field beamforming enables the ELAA to focus multiple beams in the same direction but at different distances. The five focal points are selected so that the consecutive DF intervals intersect where the normalized antenna array gain is $0.5$.}
	\label{fig:simulation_nearfield_multiplexing}  
\end{figure}

Fig.~\ref{fig:simulation_nearfield_multiplexing} shows the normalized antenna array gains obtained at different distances when MF beamforming is used to transmit at the aforementioned five focal points. We consider the same setup as in Fig.~\ref{fig:focusing} and notice that the focal points were selected so that the consecutive DF intervals intersect where the normalized gain is $0.5$.
Since these DF intervals are clearly different, the ELAA can serve all five users simultaneously even if they are in the same angular direction.

The channel vector $\vect{h}  = [h_1,\ldots,h_N]^{\Ttran}$ to a focal point at the arbitrary distance $d$ can be computed as described in Section~\ref{subsec:system-model}, for the case of electrically small antennas. 
A more general expression can be obtained by integration of the electric field in \eqref{eq:intensity-function}:
\begin{equation}
h_{n} = \frac{1}{E_0} \sqrt{\frac{1}{A}} \int_{\mathcal{S}_{n}}  E(x,y) dx \, dy.
\end{equation}
By generating the channel vectors to multiple focal points, we can apply standard methods to evaluate the SE with different multi-user beamforming schemes \cite{massivemimobook}.

We now continue the example in Fig.~\ref{fig:depth_multiplexing}, where we use the five beams to transmit downlink data to users located at the respective focal points. The only change is that we use $d_1=d_{\mathrm{FA}}$ since the channel gain is zero at infinity. The figure shows the sum SE that is achieved for different SNR values. Since the users experience widely different channel gains, the reference SNR is achieved at the outermost user when it is allocated all the power.
We compare depth-domain spatial multiplexing of the users, based on zero-forcing (ZF) beamforming and waterfilling power allocation, with a scheduling baseline where each user is allocated $1/5$ of the time resources.
We notice that ZF achieves a roughly $4\times$ higher SE than scheduling, which validates the ability to distinguish between users in the depth domain using near-field beamforming.
Interestingly, the user closest to the ELAA achieves the highest SE, despite having the lowest gain in Fig.~\ref{fig:simulation_nearfield_multiplexing}, because it has the largest channel gain. It is the shape of the beams that determines the ability to spatially distinguish users, while the channel gain determines the final SE.

\begin{figure}[t!]
	\centering 
	\begin{overpic}[width=0.9\columnwidth,tics=10]{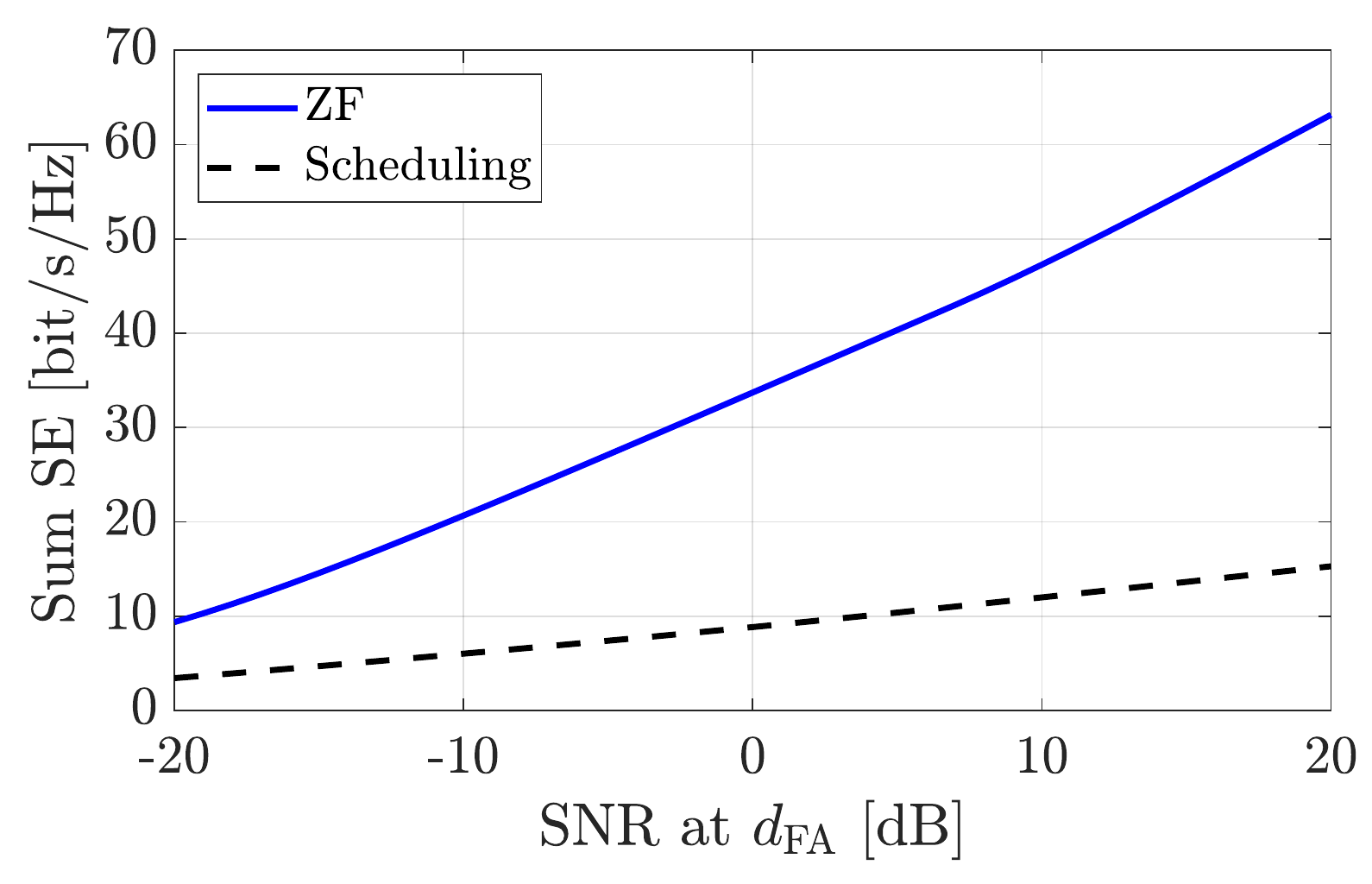}
		 \put(88,53){\vector(2,1){8}}
	 \put(64,54){Slope: $5 \log_2(\mathrm{SNR})$}
\end{overpic} 
	\caption{The sum SE achieved by depth-domain spatial multiplexing of five users located in the same angular direction, but at different propagation distances. ZF is compared with a scheduling baseline where the users take turns.}
	\label{fig:depth_multiplexing}  
\end{figure}

The curve in Fig.~\ref{fig:depth_multiplexing} grows as $K \log_2( \mathrm{SNR}) + \mathrm{constant}$ at high SNR, where $K=5$ is the number of multiplexed users. This is a classical scaling behavior for multi-antenna systems \cite{Lozano2005a} and demonstrates that the $K$ channel vectors span a $K$-dimensional vector space. The factor $K$ is called the \emph{spatial degrees-of-freedom (DoF)} and manifests the maximum capacity that the system can achieve. 
When considering an ELAA with thousands of antennas, it is interesting to quantify the maximum DoF, to determine how many users could be efficiently served by spatial multiplexing.
It is proved in \cite{Hu,Pizzo2020a} that this limit for a large planar array is
\begin{equation}
    K \leq \pi \cdot \frac{\textrm{Array area}}{\lambda^2},
\end{equation}
which implies that each segment of area $\lambda^2$ that is added to the array enables spatial multiplexing of another $\pi$ users.
Importantly, it is not the number of antennas or the antenna spacing that determines the DoF, but the array's aperture. This is why we specifically need ELAAs in future systems, while the benefits of building them using electrically small antennas only provide a comparably minor SNR boost.
We can keep up with an increasing user load by adding extra segments to the ELAA.

\begin{example}{Example 4}The maximum spatial DoF with an ELAA of size $1.79 \times 1.79$\,m is $1000$ at the wavelength $\lambda=0.1$\,m (i.e., $f=3$\,GHz) and $10^5$ at $\lambda=0.01$\,m (i.e., $f=30$\,GHz). Hence, we can achieve huge DoF with relatively small arrays, since the aperture is the physical size relative to the wavelength.
\end{example}

\section{Conclusion}

There are two fundamental ways to increase the capacity that a wireless access point can deliver: 1) Use larger antenna apertures to enable stronger beamforming and more spatial multiplexing; 2) Use more bandwidth at shorter wavelengths.
Both approaches lead to a paradigm where the user devices will predominately be in the radiative near-field of the access point. In this chapter, we have analyzed ELAAs that operate in LoS scenarios. Firstly, we demonstrated that classical metrics such as the Fraunhofer distance $d_{\mathrm{F}}$ and the natural Fraunhofer array distance extension $d_{\mathrm{FA}}$ are unsuitable for characterizing the border between the near-field and far-field in communications.
As illustrated in Fig.~\ref{fig:distances-ELAA}, the radiative near-field instead begins close to the array and continues roughly until the distance $d_{\mathrm{FA}}/10$. In this region, the spherical wavefronts are noticeable and must be taken into account. Secondly, we demonstrated that beamforming in the radiative near-field results in beams with a finite depth.
The depth domain can be a game changer when multiplexing massive crowds of users, which would otherwise be hard to distinguish by the access point.

\begin{figure}[t!]
\begin{center}
	\begin{overpic}[width=\columnwidth,tics=10]{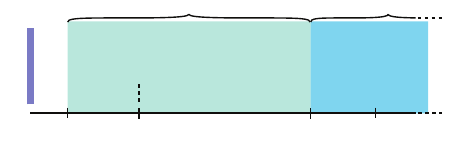}
	 \put(1,27){Transmitter}
	 \put(14,2){$d_{\mathrm{F}}$}
	 \put(29.5,2){$d_{\mathrm{B}}$}
	 \put(66.5,2){$\frac{d_{\mathrm{FA}}}{10}$}
	 \put(81,2){$d_{\mathrm{FA}}$}
	 
	 \put(33.5,22){Spherical waves}
	 \put(32.5,16){Finite beam depth}
	 
	 \put(15.7,10){Reduced gain}
	 \put(41,10){Maximum gain}
	 
	 \put(74.5,22){Plane waves}
	 \put(70,16){Beam depth to infinity}
	 \put(73,10){Maximum gain}
	 \put(27,30){\textbf{Radiative near-field region}}
	 \put(77,30){\textbf{Far-field region}}
\end{overpic} 
\end{center} \vspace{-6mm}
\caption{A summary of the main distinctions between the radiative near-field and the far-field, when these regions are defined based on the communication characteristics.} \label{fig:distances-ELAA} 
\end{figure}

If the beamforming scheme (e.g., MF) takes the spherical waves into account, we can often achieve the same antenna array gain as in the far-field. However, when closer than the Bj\"ornson distance $d_{\mathrm{B}}$, the gain will also decrease due to three essential near-field characteristics: varying antenna distances, effective antenna areas, and polarization losses. It is mainly the channel models (and potentially the channel estimation protocols \cite{Wei2022a}) that need to be revised when considering ELAAs, while the signal processing methods and capacity analysis from mMIMO textbooks such as \cite{massivemimobook} remain valid.
The main open research challenges are instead related to signal processing and hardware design \cite{Amiri2018a,Zhang2022a} (e.g., cost-efficient implementation of ZF with many antennas and users), channel modeling for more complex scenarios than LoS \cite{Pizzo1,Demir,Dong2022a}, and prototyping.

\backmatter
\appendix

\chapter{Proofs of Main Results in Chapter 1}

\section{Proof of Lemma \ref{lemma1}}
\label{app:proof_lemma1}

Consider an elementary transmitting surface with area $A_t$ and centroid located in $\vect{p}_t = [x_t,y_t,d]$.
The electric field ${\bf E}(\vect{p}_t, \vect{p}_r)\in \mathbb{C}^{3}$ generated at a point $\vect{p}_r = [x_r,y_r,0]$ takes the form \cite{Dardari}
\begin{align}
{\bf E}(\vect{p}_t, \vect{p}_r)  = {\bf G} (\vect{p}_r - \vect{p}_t){\bf J}(\vect{p}_t)
\end{align}
where ${\bf J}(\vect{p}_t) = J_x(\vect{p}_t)\hat{{\bf u}}_x+ J_y(\vect{p}_t)\hat{{\bf u}}_y + J_z(\vect{p}_t)\hat{{\bf u}}_z$ is the \emph{radiation vector}, which is measured in [A$\cdot$m] and is determined by the surface's current density. Note that $\hat{{\bf u}}_x,\hat{{\bf u}}_y,\hat{{\bf u}}_z$ represent the unit vectors
in the $x,y,z$ directions.

The Green function ${\bf G} (\vect{p}_r)\in \mathbb{C}^{3\times 3}$ is
well-approximated for $|| \vect{p}_r|| \gg \lambda$ as \cite{Poon}
\begin{align}\label{eq:green_function}
{\bf G} (\vect{p}_r) = -j \eta\frac{e^{-j \frac{2\pi }{\lambda}{||\vect{p}_r||}}}{2\lambda ||\vect{p}_r||}\left({\bf I}_3 - \hat{{\bf p}}_r\hat{{\bf p}}_r^{\Htran}\right)
\end{align}
with $\hat{{\bf p}}_r =  \frac{\vect{p}_r}{|| \vect{p}_r||}$ and $\eta$ is the impedance of free space. This approximation is tight when the receiver is beyond the reactive near-field of the transmitter.

It was assumed in the lemma that only the $Y$-direction of ${\bf J}(\vect{p}_t)$ is excited at the transmitter, thus we have that ${\bf J}(\vect{p}_t) = J_y(\vect{p}_t)\hat{{\bf u}}_y$. The electric field reduces to 
\begin{align}
{\bf E}(\vect{p}_t, \vect{p}_r)  & = {\bf G}_y (\vect{p}_r - \vect{p}_t)J_y(\vect{p}_t) 
\end{align}
where ${\bf G}_y (\vect{p}_r  - \vect{p}_t)={\bf G} (\vect{p}_r - \vect{p}_t)\hat{{\bf u}}_y$ is the second column of the Green function in \eqref{eq:green_function}. The complex-valued channel coefficient $\epsilon(\vect{p}_t, \vect{p}_r)$ from the considered transmitting surface located in $\vect{p}_t$ to the receive point $\vect{p}_r$ in the $XY$-plane can be divided into its amplitude and phase as
\begin{align}\label{eq:channel}
\epsilon(\vect{p}_t, \vect{p}_r) = |\epsilon(\vect{p}_t, \vect{p}_r)|e^{-j \frac{2\pi }{\lambda}{||\vect{p}_r - \vect{p}_t||}}.
\end{align}
It follows from \cite[Eqs. (16) and (19)]{Dardari} that
\begin{align} \notag
|\epsilon(\vect{p}_t, \vect{p}_r)|^2 &=  A_t \overbrace{\frac{4}{\eta^2}||{\bf G}_y (\vect{p}_r - \vect{p}_t)||^2}^{\text{Power gain}}\overbrace{\frac{ ({\vect{p}_r - \vect{p}_t})^{\Ttran}\hat{{\bf u}}_z}{||{\vect{p}_r - \vect{p}_t}||}}^{\text{Projection on the $Z$ direction}} \\ &= \frac{1}{4 \pi}{\frac{ d \left( (x_r-x_t)^2 + d^2 \right)}{ \left( (x_r -x_t)^2 + (y_r-y_t)^2 + d^2 \right)^{5/2}}}.\label{eq:channel_gain}
\end{align}
As indicated on the first row, this is the channel gain in the $Z$-direction (i.e., perpendicularly to the array) where $\frac{ {\vect{p}_r - \vect{p}_t}}{||{\vect{p}_r - \vect{p}_t}||}$ is the pointing direction of the electric field and $A_t = \lambda^2/(4\pi)$ is the area of an isotropic antenna. 
The considered antenna is assumed to have the dimensions $a  \times a$ in the $XY$-plane, around $\vect{p}_n=(x_n,y_n,0)$, thus the channel is
\begin{align}
h_n(\vect{p}_t) =  \frac{1}{a} \int_{x_n-a/2}^{x_n+a/2} \int_{y_n-a/2}^{y_n+a/2} \epsilon(\vect{p}_t, \vect{p}_r) \partial x_r \partial y_r.
\end{align}
The channel gain can be computed as $| h_n(\vect{p}_t) |^2$ and is given in \eqref{eq:total_gain}.

\section{Proof of Theorem \ref{lemma2}} \label{app:proof_lemma2}

We can compute an upper bound on the channel gain expression in \eqref{eq:total_gain} as
\begin{align} \notag
\left|h_n(\vect{p}_t)\right|^2 &= \left|   \frac{1}{a} \int_{x_n-a/2}^{x_n+a/2} \int_{y_n-a/2}^{y_n+a/2} \epsilon(\vect{p}_t, \vect{p}_r) \partial x_r \partial y_r\right|^2 \\ &\leq \int_{x_n-a/2}^{x_n+a/2} \int_{y_n-a/2}^{y_n+a/2} \left|\epsilon(\vect{p}_t, \vect{p}_r)\right|^2 \partial x_r \partial y_r   = \zeta_{\vect{p}_t,\vect{p}_n},\label{eq:channel_gain_approximated}
\end{align}
 using the Cauchy-Schwarz inequality 
$ |  \iint  \epsilon(\vect{p}_t, \vect{p}_r) dx  dy |^2 \!\leq \!   \iint  |\epsilon(\vect{p}_t, \vect{p}_r)  |^2 dx  dy \cdot \iint 1 dx  dy$.

To compute $\zeta_{\vect{p}_t,\vect{p}_n}$ in \eqref{eq:channel_gain_approximated} in closed form, we need to solve the integral

\begin{align} 
\zeta_{\vect{p}_t,\vect{p}_n} &= \frac{1}{4\pi} \int_{x_n-a/2}^{x_n+a/2} \int_{y_n-a/2}^{y_n+a/2} \frac{ d \left( (x_r-x_t)^2 + d^2 \right)   \partial x_r \partial y_r}{ \left( (x_r -x_t)^2 + (y_r-y_t)^2 + d^2 \right)^{5/2} } \label{eq:pathloss-integral} \\ \notag 
&= \int_{x_n-a/2}^{x_n+a/2} \int_{y_n-a/2}^{y_n+a/2}
\underbrace{ \frac{ d }{ \sqrt{ (x_r -x_t)^2 + (y_r-y_t)^2 + d^2 } }}_{\textrm{Reduction in effective area from directivity}} \\ \notag &\times 
\underbrace{\frac{  (x_r-x_t)^2 + d^2  }{ (x_r -x_t)^2 + (y_r-y_t)^2 + d^2  }}_{\textrm{Polarization loss factor}} \times \underbrace{\frac{  \partial x_r \partial y_r }{ 4\pi ( (x_r -x_t)^2 + (y_r-y_t)^2 + d^2 ) } }_{\textrm{Free-space pathloss}{}}.
\end{align}
The contributions of the three fundamental properties when operating in the near-field of the array (i.e., the variations in distances to the antennas, in the effective antenna areas, and in the polarization losses) are stated explicitly in this expression. 
The rest of the proof follows from computing the integral in \eqref{eq:pathloss-integral} and the details can be found in \cite[App.~A]{Bjornson2}.

\vspace{-1mm}

\section{Proof of Corollary~\ref{cor:far-field_mMIMO}}
\label{app:proof_cor:far-field_mMIMO}

When $d \cos(\varphi) \gg \sqrt{2N A}$, it follows that $B+1 \approx 1$ and $2B+1 \approx 1$. We can then utilize the fact that $\tan^{-1}(x) \approx x$ for $x \approx 0$  to approximate \eqref{eq:xi-mMIMO} as
\begin{align} \xi_{d,\varphi,N} \approx \sum_{i=1}^{2} \frac{ B +(-1)^i \sqrt{B} \tan(\varphi)  }{2 \pi \sqrt{ \tan^2(\varphi) + 1 + 2(-1)^i \sqrt{B} \tan(\varphi)} }.  \label{eq:xi-mMIMO-approx1}
\end{align}
Furthermore, we can utilize the fact that $\sqrt{1+x} \approx 1 + x/2$ for $x\approx 0$ to approximate the denominator of \eqref{eq:xi-mMIMO-approx1} and obtain
\begin{align} \notag
\xi_{d,\varphi,N} & \approx \sum_{i=1}^{2} \frac{ B +(-1)^i \sqrt{B} \tan(\varphi)  }{2 \pi \sqrt{1+\tan^2(\varphi)} \left( 1 + \frac{(-1)^i \sqrt{B} \tan(\varphi)}{1+\tan^2(\varphi)} \right)} \\ \notag
&
= \frac{ 2B - \frac{2B \tan^2(\varphi) }{1+\tan^2(\varphi)} }{2 \pi \sqrt{1+\tan^2(\varphi)} \left( 1 + \frac{ \sqrt{B} \tan(\varphi)}{1+\tan^2(\varphi)} \right) \left( 1 - \frac{ \sqrt{B} \tan(\varphi)}{1+\tan^2(\varphi)} \right)} \\
& \approx \frac{B}{\pi (1+\tan^2(\varphi))^{3/2}} = N  \underbrace{\beta_{d \cos(\varphi)} \cos^3(\varphi)}_{=\zeta_{d,\varphi}}
\label{eq:xi-mMIMO-approx2}
\end{align}
where we simplified the expression by writing the two fractions as a single fraction, then utilized that $1 - \frac{(-1)^i \sqrt{B} \tan(\varphi)}{1+\tan^2(\varphi)} \approx 1$, and finally that $1+\tan^2(\varphi) = 1/\cos^2(\varphi)$.

\vspace{-1mm}

\section{Proof of Corollary~\ref{cor:normalized-array-gain}}
\label{app:proof_corr15}

The upper bound is obtained by applying the Cauchy-Schwarz inequality to the numerator of \eqref{eq:antenna-array-gain-exact} as $ |  \int_{\mathcal{S}_n}  E(x,y) dx  dy |^2 \!\leq \!   \int_{\mathcal{S}_n}  |E(x,y)  |^2 dx  dy \int_{\mathcal{S}_n}1 dx  dy$. This results in
\begin{align} 
& G_{\mathrm{array}}  \leq  \frac{ \sum_{n=1}^N   A \int_{\mathcal{S}_n}  \left| E(x,y) \right|^2 dx \, dy  }{ N  A    \int_{\mathcal{S}} \left| E(x,y) \right|^2 dx \, dy}  =  \frac{  \int_{-\sqrt{NA}/2}^{\sqrt{NA}/2}  \int_{-\sqrt{NA}/2}^{\sqrt{NA}/2}  \left| E(x,y) \right|^2 dx \, dy  }{  N  \int_{-\sqrt{A}/2}^{\sqrt{A}/2}  \int_{-\sqrt{A}/2}^{\sqrt{A}/2} \left| E(x,y) \right|^2 dx \, dy}  \label{eq:G-upper-bound-derivation}.
\end{align}
The remaining integrals are of the same kind as in 
Theorem~\ref{lemma2} and equals $\alpha_{d,N}$ and $\alpha_{d,1}$ from Corollary~\ref{cor:alpha_expression}, respectively.

\end{document}